\keywords{Concurrency; Matching logic;  Soundness;  Race condition}
\theoremstyle{plain} %\crefname{satz}{Satz}{S\"atze}
\begin{document}

\title[]{Concurrent Matching Logic}

\author[S.~Wang]{ShangBei Wang}	%required
\address{Nanjing University of Aeronautics and Astronautics, Nanjing, China}	%required
\email{wangshangbei123@nuaa.edu.cn}  %optional
%\thanks{thanks 1, optional.}	%optional

%\author[B.~Name2]{Bob Name2}	%optional
%\address{address2; addresses should initially be duplicated, even if
%  authors share an affiliation}	%optional
%\email{name2@email2; ditto for email addresses}  %optional
%\thanks{thanks 2, optional.}	%optional

%\author[C.~Name3]{Carla Name3}	%optional
%\address{address 3}	%optional
%\urladdr{name3@url3\quad\rm{(optionally, a web-page can be specified)}}  %optional
%\thanks{thanks 3, optional.}	%optional

%% etc.

%% required for running head on odd and even pages, use suitable
%% abbreviations in case of long titles and many authors:

%%%%%%%%%%%%%%%%%%%%%%%%%%%%%%%%%%%%%%%%%%%%%%%%%%%%%%%%%%%%%%%%%%%%%%%%%%%

%% the abstract has to PRECEDE the command \maketitle:
%% be sure not to issue the \maketitle command twice!

\begin{abstract}
Matching logic cannot handle concurrency. We introduce concurrent matching logic (CML) to reason about fault-free
partial correctness of shared-memory concurrent programs. We also present a soundness proof for concurrent matching logic (CML) in terms of operational semantics. Under
certain assumptions, the assertion of CSL can be transformed into the assertion of CML. Hence, CSL can be seen as an
instance of CML.
\end{abstract}

\maketitle

%% start the paper here:
\section*{Introduction}\label{S:one}
It is inevitable that the concurrent execution of shared-memory  programs produce races in which one process changes a piece of state that is simultaneously being used by another process. There are a number of approaches to guarantee both race-freedom and correctness when reasoning about shared-memory concurrent programs. Based on ``spatial separation", Hoare\cite{hoare1972towards} introduced formal proof rules for shared-memory concurrent programs. Expanding Hoare's work, Owicki and Gries\cite{owicki1976axiomatic}\cite{owicki1976verifying} introduced a syntax-directed logic for shared-memory concurrent programs. The critical variables and resources are key in Owicki and Gries's logic. The critical variables are identifiers which can by concurrently read and written by processes. Each occurrence of a critical variable must be inside a critical region protected by a resource name. As a result, processes are mutually exclusive access to critical variables and free of races. However, Brookes\cite{brookes2007semantics} pointed out that this approach works well for pointer-free shared-memory concurrent programs, but fails when pointer aliases are included in concurrent programs. Concurrent Separation Logic (CSL) was proposed by O'Hearn\cite{o2007resources} using separation logic\cite{reynolds2002separation}, together with an adaptation of the Owicki-Gries's methodology, to reason about partial correctness of concurrent pointer-programs. The innovation of CSL is to insert the separating conjunction into rules dealing with resource invariants and parallel composition. But Reynolds has shown that O'Hearn's rules are unsound without restrictions on resource invariants\cite{o2009separation}. Brookes\cite{brookes2007semantics} used action traces to provide a denotational semantics that can detect race. Then, the inference rules of O'Hearn were redefined in a more semantically way, and the sound of the proof rules was proved. Ian Wehrman and Josh Berdine found a counter example, suggesting that the sound of this method was based on the implicit assumption that no other program modified it. In order to avoid this problem, Brookes developed a fully compositional concurrent separation logic by adding ``rely set" to the assertions of CSL\cite{brookes2011revisionist}.\\
Dijkstra\cite{dijkstra1968cooperating} stated that one of the basic principles of concurrent program is that processes should be loosely connected. Processes should be considered completely independent of each other except when they are explicitly interacting with each other. The principle reflected in the above approaches is the idea of ``resource separation". At any time, the state can be divided into two separate portions, one for the process and the other satisfying the relevant resource invariant, for the available resource. When a process acquires a resource, it has ownership of the separate portion of state associated with the resource; When releasing a resource, it must ensure that resource invariant continues to hold and return ownership of the corresponding separate portion of state. The idea of ``resource separation" fit particularly well with the viewpoint of separation logic. Therefore, O'Hearn inserted the separating conjunction in appropriate places in the rules studied by Owicki and Gries, and then the very popular concurrent separation logic (CSL) came into being.\\
Matching logic\cite{rosu2017matching}\cite{rocsu2010matching} introduced by Grigore Ro\c{s}u has inherent support for heap separation without the need to extend the logic with separating conjunction. In other words, matching logic inherently supports the viewpoint of ``resource separation". In addition, Grigore Ro\c{s}u showed that separation logic is an instance of matching logic both syntactically and semantically\cite{rosu2017matching}. Therefore, it is natural to think that matching logic can be used for proving certain correction properties of concurrent programs. Unfortunately, matching logic does not typically handle concurrency. Inspired by the concurrent separation logic (CSL), we introduce Concurrent Matching Logic (CML) for reasoning about fault-free partial correctness of shared-memory concurrent programs in this paper.\\
First, we give an operational semantics model for concurrency, using K semantics framework\\\cite{rosu2010overview}\cite{rosu2017k}, which include race-detection. Our operational semantics , based on ``actions" ,is  transition traces semantics, describes the interleaving behavior of processes and without interference unless synchronized. Second, in order to be able to handle concurrency, we not only extend  Grigore Ro\c{s}u's matching logic inference rules, allowing resource declarations and concurrent compositions, but also adjust the pattern of the matching logic. Like CSL, the assertion requires a ``rely set" $A$, which represents a set of variables that are not changed by ``environment moves". CSL uses separation logic formulas to describe the state before and after process execution. Instead of logic formulas, matching logic uses patterns. The separation logic formula is abstract, and CSL can gracefully handle ``environment moves" simply by relying on the ``rely set". However, pattern involves ``low-level" operational aspects, such as how to express the state. In addition to a ``rely set" $A$, we need  a ``key set" $B$ to handle ``environment moves". ``Key set" B is also a collection of variables that is used when a concrete configuration $\gamma$ matches CML pattern. Finally, we give the notion of validity and prove CML is sound to our operational semantics model for concurrency.  We also analyze the relationship between the CML and the CSL, and point out that under certain assumptions, the assertions of CSL can be transformed into the assertions of CML. That is to say any property provable using CSL is also provable using CML.
\section{Preliminaries}
\subsection*{K Semantics Framework}
K is an executable semantic framework. The operational semantics of a program language $L$ is defined as a rewrite logic theory ($\Sigma_L$,$\mathcal{E}_L$,$\mathcal{R}_L$)\cite{marti1996rewriting} by K. Let $L^o$ be the algebraic specification $(\Sigma_L,\mathcal{E}_L^o)$ where $\mathcal{E}_L^o\subseteq \mathcal{E}_L$ and $\mathcal{T}^o$ be the initial $L^o$ algebra. Term $<\!\!s_c\!\!>_s<\!\!h_c\!\!>_h<\!\!N_c\!\!>_r$ in $\mathcal{T}^o$ is called state and $<\!\!k_c\!\!>_k<\!\!s_c\!\!>_s<\!\!h_c\!\!>_h<\!\!N_c\!\!>_r$ is called concrete configuration which is to add a sequence of commands to the state.\\
Let $v_1,v_2,\ldots,v_n$ are integer values and $l_1,l_2,\ldots,l_n$ are addresses which are also integer values.
A state$<\!\!s_c\!\!>_s<\!\!h_c\!\!>_h<\!\!N_c\!\!>_r$ consists of a store $<\!\!s_c\!\!>_s$, a heap $<\!\!h_c\!\!>_h$ and a set of resource names $<\!\!N_c\!\!>_r$. The store $<\!\!s_c\!\!>_s$ has the form of $<\!\!i_1\mapsto v_1,i_2\mapsto v_2,\ldots,i_n\mapsto v_n\!\!>_s$
mapping identifiers to integers. Let's define $\mathbf{dom}(s_c)=\{i| (i\mapsto v)\in s_c\}$. The heap $<\!\!h_c\!\!>_h$ maps addresses to integers and $\mathbf{dom}(h_c)=\{l| (l\mapsto v)\in h_c\}$. We use the notation $[s_c|i\mapsto v]$ to indicate that the store is consistent with all identifiers of $s_c$ except $i$, which is mapped to $v$; and the similar notation $[h_c|l\mapsto v']$. The notation $h_c\backslash l$ denotes the removal of address $l$ from the domain of $h_c$ and $\mathbf{dom}(h_c\backslash l)=\mathbf{dom}(h_c)-\{l\}$. The ``initial" state has the form of $<\!\!s_c\!\!>_s<\!\!h_c\!\!>_h<\!\!\{\}\!\!>_r$ because resources are initially available.\\
The rules in K are divided into structural rules and semantic rules. For example
$$e_1+e_2\rightleftharpoons e_1\curvearrowright \square + e_2$$
The symbol $\rightleftharpoons$ represent two structural rules, one from left to right and the other from
right to left. The first rule says that in the expression $e_1+e_2$, $e_1$ can be evaluated first, and $e_2$ is reserved for later evaluation. Since the above rules are bidirectional, when the first and second rules are used iteratively, they complete the evaluation of $e_1$ and $e_2$. Then, semantic rule tells how to process.
\subsection*{Separation Logic Formula} A separation logic formula\cite{reynolds2002separation}\cite{o2019separation} is given by the following grammar.
$$p::=b\;|\;\mathbf{emp}\;|\;e\mapsto e'\;|\;p_1*p_2\;|\;p_1\wedge p_2\;|\;p_1\vee p_2\;|\;\neg p$$
Let $<\!\!s_1\!\!>_s<\!\!h_1\!\!>_h<\!\!N_1\!\!>_r$, $<\!\!s_2\!\!>_s<\!\!h_2\!\!>_h<\!\!N_2\!\!>_r$ are states.
If $\mathbf{dom}(h_1)\cap \mathbf{dom}(h_2)=\emptyset$, $h_1$ and $h_2$ are called disjoint, written $h_1\bot h_2$. we also write $h_1\cdot h_2=h_1 \cup h_2$.\\
The satisfaction relation $<\!\!s_c\!\!>_s<\!\!h_c\!\!>_h<\!\!N_c\!\!>_r\models p$ is defined as follows:
\begin{align*}
&<\!\!s_{c}\!\!>_s<\!\!h_{c}\!\!>_h<\!\!N_{c}\!\!>_r \models b \xLeftrightarrow{def} s_{c}(b)=true \\
   &<\!\!s_{c}\!\!>_s<\!\!h_{c}\!\!>_h<\!\!N_{c}\!\!>_r \models \mathbf{emp} \xLeftrightarrow{def} \mathbf{dom}(h_{c})=\emptyset \\
   &<\!\!s_{c}\!\!>_s<\!\!h_{c}\!\!>_h<\!\!N_{c}\!\!>_r \models e\mapsto e' \xLeftrightarrow{def} (s_{c}(e)\in\mathbf{dom}(h_{c})) \wedge (h_{c}(s_{c}(e))=s_{c}(e'))\\
   &<\!\!s_c\!\!>_s<\!\!h_{c}\!\!>_h<\!\!N_{c}\!\!>_r \models p\wedge q \xLeftrightarrow{def} (<\!\!s_c\!\!>_s<\!\!h_{c}\!\!>_h<\!\!N_{c}\!\!>_r \models p)\; \wedge (<\!\!s_c\!\!>_s<\!\!h_{c}\!\!>_h<\!\!N_{c}\!\!>_r \models q)\\
   &<\!\!s_c\!\!>_s<\!\!h_{c}\!\!>_h<\!\!N_{c}\!\!>_r \models p\vee q \xLeftrightarrow{def} (<\!\!s_c\!\!>_s<\!\!h_{c}\!\!>_h<\!\!N_{c}\!\!>_r \models p)\; \vee (<\!\!s_c\!\!>_s<\!\!h_{c}\!\!>_h<\!\!N_{c}\!\!>_r \models q)\\
   &<\!\!s_c\!\!>_s<\!\!h_{c}\!\!>_h<\!\!N_{c}\!\!>_r \models \neg p \xLeftrightarrow{def} \neg <\!\!s_c\!\!>_s<\!\!h_{c}\!\!>_h<\!\!N_{c}\!\!>_r \models p\\
   &<\!\!s_c\!\!>_s<\!\!h_{c}\!\!>_h<\!\!N_{c}\!\!>_r \models p*q \xLeftrightarrow{def} \exists\; h_{c1}, h_{c2}. (h_{c1}\bot h_{c2}) \wedge (h=h_{c1}\cdot h_{c2}) \wedge\\
   &(<\!\!s_c\!\!>_s<\!\!h_{c1}\!\!>_h<\!\!N_{c}\!\!>_r \models p)\; \wedge (<\!\!s_c\!\!>_s<\!\!h_{c2}\!\!>_h<\!\!N_{c}\!\!>_r \models q)
\end{align*}
\begin{lem}
Let $p$ is a separation logic formula, $e$ is a expression, $i$ is an identifier that occurs freely in formula $p$, $p[e/i]$ means to replace $i$ in $p$ with $e$, then
$$<\!\!s_c\!\!>_s<\!\!h_{c}\!\!>_h<\!\!N_{c}\!\!>_r\models p[e/i]\Leftrightarrow <\!\![s_c|i\mapsto s_c(e)]\!\!>_s<\!\!h_{c}\!\!>_h<\!\!N_{c}\!\!>_r\models p$$
\end{lem}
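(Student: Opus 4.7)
The plan is a routine structural induction on the separation logic formula $p$, so I would first state an auxiliary fact at the expression level: for any expression $e'$ and identifier $i$, $s_c(e'[e/i]) = [s_c|i\mapsto s_c(e)](e')$. This is proved by a quick sub-induction on the syntax of expressions and boolean expressions and is the only place the meaning of substitution is actually unpacked; every later case of the main induction will simply invoke it.

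With that in hand, I would perform the structural induction on $p$ following the grammar given before the lemma. The base cases split into three. For $p = b$, both sides reduce to comparing $s_c(b[e/i])$ and $[s_c|i\mapsto s_c(e)](b)$, which are equal by the auxiliary fact. For $p = \mathbf{emp}$, substitution is vacuous and the semantic clause depends only on $\mathbf{dom}(h_c)$, which is untouched on both sides, so the equivalence is immediate. For $p = e_1 \mapsto e_2$, the clause $s_c(e_1)\in\mathbf{dom}(h_c) \wedge h_c(s_c(e_1)) = s_c(e_2)$ is transported across the equivalence again by the expression-level substitution fact applied to $e_1$ and $e_2$.

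The inductive cases are straightforward because the store is forwarded unchanged to every subformula in the satisfaction clauses. For $p_1 \wedge p_2$, $p_1 \vee p_2$, and $\neg p_1$, I would unfold the satisfaction clause, apply the induction hypothesis to each immediate subformula, and fold back. For the separating conjunction $p_1 * p_2$, I would pick the same witnesses $h_{c1}, h_{c2}$ with $h_{c1}\bot h_{c2}$ and $h_c = h_{c1}\cdot h_{c2}$ on both sides, since the splitting is entirely on the heap and the store component $s_c$ versus $[s_c|i\mapsto s_c(e)]$ is threaded identically through $p_1$ and $p_2$; the induction hypothesis applied once to each conjunct closes the case.

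I do not expect a genuine obstacle here: the only content of the lemma lives in the expression-level substitution fact, and the rest is bookkeeping. The one thing I would be careful about is the implicit assumption that $i$ does not get captured, which is fine because separation logic formulas in the grammar given have no binders; if the grammar were later extended with quantifiers, the statement would need the usual freshness side condition on $i$ with respect to bound variables of $p$.
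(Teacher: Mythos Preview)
Your proposal is correct and follows exactly the approach the paper indicates: structural induction on $p$. The paper's own proof is a one-line remark (``by induction on the structure of $p$''), so your write-up is strictly more detailed than what the authors provide, but the strategy is identical.
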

\begin{proof}
by induction on the structure of $p$.
\end{proof}
A separation logic formula $p$ is precise\cite{brookes2007semantics} if, for all states $<\!\!s_c\!\!>_s<\!\!h_{c}\!\!>_h<\!\!N_{c}\!\!>_r$, there is at most one sub-heap $h_{c1}\subseteq h_{c}$ such that $<\!\!s_c\!\!>_s<\!\!h_{c1}\!\!>_h<\!\!N_{c}\!\!>_r\models p$.
\subsection*{Resource}
As in Owicki-Gries\cite{owicki1976verifying}, each resource name $r$ has a set $X$ containing the protected identifiers and a resource invariant $R$\cite{hoare1972towards}. A resource context $\Gamma$ has the form
$$r_1(X_1):R_1,\cdots,r_n(X_n):R_n$$
where $r_1,\cdots,r_n$ are different resource names, $R_1,\cdots,R_n$ are separation logic formulas and $r(X):R \in \Gamma$ imply $r$ protects $x\in X$. \\
Let $\mathbf{owned}(\Gamma)=X_1 \cup X_2 \cup \cdots \cup X_n$, $\mathbf{inv}(\Gamma)=R_1*R_2*\cdots*R_n$, and $\mathbf{dom}(\Gamma)=\{r_1,r_2,\cdots,r_n\}$.
For each $i$, if $R_i$ is precise and $\mathbf{free}(R_i)\subseteq X_i$, we say $\Gamma$ is well-formed.
\subsection*{Matching Logic Pattern}
In G. Ro\c{s}u's matching logic, program variables are syntactic constants. In other words, one cannot quantify over program variables. Let $\mathit{Var}$ is an infinite set of logical variables. The pattern of matching logic has the form
$\exists X((o=c)\land p)$
where "$\mathit{o}$" is a placeholder; $X \subset \mathit{Var}$ is a set of bound variables; $c$ is a pattern structure; $\mathit{FOL}_=$ formula $p$ is a constraint. Valuation $(\gamma,\tau)$ has a concrete configuration $\gamma$ and a map $\tau:\mathit{Var}\to Int$.
$(\gamma,\tau)\models \exists X((o=c)\land p)$ iff there exists $\theta_\tau:\mathit{Var}\to Int$ with $\theta_\tau\!\!\upharpoonright_{\mathit{Var}/X}=\tau\!\!\upharpoonright_{\mathit{Var}/X}$ such that $\gamma=\theta_\tau(c)$ and $\theta_\tau \models p$.\\
Matching logic works well on the sequential processes\cite{chen2019matching}\cite{chen2021matching}\cite{rocsu2012hoare}\cite{rosu2011matching}, but not on the shared-memory concurrent processes. The reasons are as follows:
\begin{itemize}
  \item Matching logic cannot handle concurrency. There are critical variables which can by concurrently read and written by processes. Matching logic has no rules to guarantee that processes are mutually exclusive access to the critical variables and freedom from races;
  \item Program variables are syntactic constants in matching logic. The syntax-directed method such as ``rely set" representing a set of program identifiers assumed to be left unmodified by the ``environment moves" cannot be used in matching logic;
  \item CSL uses separation logic formula to describe the state before and after process execution. Instead of separation logic formula, matching logic uses pattern. Separation logic formula is relatively abstract and CSL can deal with ``environment moves" gracefully. However, pattern involves ``low-level" operational aspects, such as how to express the state. ``Environment moves" is a disaster for matching logic.
\end{itemize}
\section{Syntax}
\subsection*{Syntax}
The program language $L$ is the same as in the Concurrent separation logic\cite{brookes2007semantics}. Suppose $r,i,e,b,E,c$ are
meta-variables, and $r$ represents resource names, $i$ represents identifiers, $e$ represents integer expressions, $b$ represents boolean expressions, $E$ represents list expressions, $c$ represents commands. The resource name acts like a binary semaphore. It is also an integer variable, but the value is limited to either 0, which means the resource is in used, or 1, which means the resource is available. The expression is pure, that is, the value of the expression is heap-independent. The syntax of command is defined as follows:
$$c::=\mathbf{skip}\mid i:=e\mid i:=[e] \mid [e_1]:=e_2\mid i:=\mathbf{cons}\;E \mid \mathbf{dispose}\;e\mid k_1;k_2$$
$$\mid\mathbf{if}\;b\;\mathbf{then}\;k_1\;\mathbf{else}\;k_2\mid\mathbf{while}\;b\;\mathbf{do}\;k\mid\mathbf{resource}\;r\;\mathbf{in}\;k\mid\mathbf{with}\;r\;\mathbf{when}\;b\;\mathbf{do}\;k\mid k_1||k_2$$
$\mathbf{with}\;r\;\mathbf{when}\;b\;\mathbf{do}\;k$ is a conditional critical region of resource $r$. Before a process enters the conditional critical section of resource $r$, it must wait until resource $r$ is available, then obtain resource $r$ and estimate the value of $b$: if $b$ is true, the process executes $k$ and releases resource $r$ when $k$ completes; if $b$ is false, the process releases resources $r$ and waits for a retry. A resource can only be held by one process at a time. $\mathbf{resource}\;r\;\mathbf{in}\;k$ introduces a local resource name $r$, whose scope is $k$, it means that the resource $r$ is assumed initially available in $k$ and the actions involving $r$ are executed without interference.
\subsection*{Actions}
An action is an atomic unit used to measure the execution of a program. Let $\lambda$ is a meta-variable ranging over actions and $\lambda$ has the following form
$$\lambda::=\delta\mid i=v\mid i:=v\mid [l]=v \mid [l]:=v \mid \mathbf{alloc}(l,E)\mid \mathbf{disp}\;l\mid \mathbf{try}\;r\mid \mathbf{acq}\;r \mid \mathbf{rel}\;r \mid \mathbf{abort}$$
where $v$ ranges over integers, $l$ over addresses which are also integers, $E$ over list of integers. Every action has a natural and intuitive explanation. For example, $\mathbf{try}\;r$ means that the resource named $r$ failed to be obtained.\\
Let $\mathbf{free}(\lambda)$ is the set of identifiers that occur freely in $\lambda$, $\mathbf{mod}(\lambda)$ is the set of identifiers that can be modified in $\lambda$, $\mathbf{writes}(\lambda)$ is the set of identifiers or heap cells that can be modified in $\lambda$, $\mathbf{reads}(\lambda)$ is the set of identifiers or heap cells whose values are read by $\lambda$, and $\mathbf{res}(\lambda)$ is the set of resource names that occur freely in $\lambda$.
\begin{equation*}
\mathbf{mod}(\lambda)=\left\{
             \begin{array}{lc}
             \{i\}, &  \lambda\equiv i:=v\\
             \emptyset, & otherwise
             \end{array}
\right.
\end{equation*}
\begin{equation*}
\mathbf{writes}(\lambda)=\left\{
             \begin{array}{lc}
             \{i\}, &  \lambda\equiv i:=v\\
             \{l\}, &  \lambda\equiv [l]:=v\\
             \{l,l+1,\cdots,l+n\}, &  \lambda\equiv \mathbf{alloc}(l,E)\\
             \{l\}, &  \lambda\equiv \mathbf{dispose}\;l\\
             \emptyset, & otherwise
             \end{array}
\right.
\end{equation*}
\begin{equation*}
\mathbf{reads}(\lambda)=\left\{
             \begin{array}{lc}
             \{i\}, &  \lambda\equiv i=v\\
             \{l\}, &  \lambda\equiv [l]=v\\
             \emptyset, & otherwise
             \end{array}
\right.
\end{equation*}
$$\mathbf{free}(\lambda)=\mathbf{reads}(\lambda)\cup \mathbf{writes}(\lambda)$$
$$\mathbf{mod}(\lambda)\subseteq\mathbf{writes}(\lambda)$$
\section*{Concurrent Matching Logic}
\subsection*{Concurrent Matching Logic Pattern}
A first major distinction between concurrent matching logic (CML) and G. Ro\c{s}u's matching logic is that program variables are logical variables. Instead of assuming $p$ is an arbitrary $\mathit{FOL}_=$ formula, we require $p$ to be separation logic formula and to be precise.\\
Let $\mathbf{Var}$ is a set of logical variables, $i_1,i_2,\ldots,i_n$ are identifiers, $v_1,v_2,\ldots,v_n$ are integer values, $l_1,l_2,\ldots,l_n$ are addresses and $x_1,x_2,\ldots,x_n \in \mathbf{Var}$. Intuitively, at any stage of program execution, the state can be divided into three portions, one portion owned by the program, one portion owned by the environment, and the rest portion belonging to currently available resources. The pattern of CML is used to describe the state before and after the program is executed. Therefore, the definition of CML pattern is as follows.
\begin{defi}
Concurrent matching logic(CML) pattern has the form
$$ \exists X((o=<\!\!k\!\!>_k<\!\!s\!\!>_s<\!\!h_1,h_2,H\!\!>_h<\!\!N_1,N_2\!\!>_r)\land p)$$
where $X\in \mathbf{Var}$ is the set of bound variables; $k$ is a sequence of commands; $s$ has the form of $i_1\mapsto x_1,\ldots, i_n\mapsto x_n$; $h_1,h_2,H$ are mappings from addresses to integers; $N_1,N_2$ are sets of resources; $p$ is the separation logic formula.
\end{defi}
Note that $h_1,h_2,H,N_1,N_2$ do not contain the variables in $\mathbf{Var}$. In CML pattern, $h_1,N_1$ represents part of the heap and part of the resource owned by the process, $h_2,N_2$ is owned by the environment, and $H$ represents the remaining heap and satisfies the resource invariants of the currently available resources. However, how do we divide store into corresponding  portions? This is impossible because store can be shared between concurrent processes. Therefore, we introduce the ``key set" B, which is a set of identifiers owned by the process. More importantly, identifiers in ``key set" B cannot be changed by ``environment moves".
\begin{defi}
A concrete configuration $\gamma=<\!\!k\!\!>_k<\!\!s_c\!\!>_s<\!\!h_{1c},h_{2c},H_c\!\!>_h<\!\!N_{1c},N_{2c}\!\!>_r$ matches CML pattern $\exists X((o=<\!\!k\!\!>_k<\!\!s\!\!>_s<\!\!h_1,h_2,H\!\!>_h<\!\!N_1,N_2\!\!>_r)\land p)$ on ``key set" $B$, iff there is some $\tau:\mathit{Var}\to Int$ such that $(\gamma,\tau)\models_{B} \exists X((o=<\!\!k\!\!>_k<\!\!s\!\!>_s<\!\!h_1,h_2,H\!\!>_h<\!\!N_1,N_2\!\!>_r)\land p)$ which is equivalent that
\begin{itemize}
  \item there exists some $\theta_\tau:\mathit{Var}\to Int$ with $\theta_\tau\!\!\upharpoonright_{\mathit{Var}/X}=\tau\!\!\upharpoonright_{\mathit{Var}/X}$;
  \item $s_c\upharpoonright_{B}=\theta_\tau(s)\upharpoonright_{B}$;
  \item $h_{1c}=h_1$ and $N_{1c} = N_1$;
  \item $<\!\!s_c\!\!>_s<\!\!h_{1c}\!\!>_h<\!\!\{\}\!\!>_r\models p$.
\end{itemize}
\end{defi}
\begin{defi}
Concurrent matching logic(CML) assertion has the form
$\Gamma \vdash_{A,B} \exists X((o=<\!\!k\!\!>_k<\!\!s\!\!>_s<\!\!h_1,h_2,H\!\!>_h<\!\!N_1,N_2\!\!>_r)\land p)\Downarrow \exists X'((o=<\!\!\cdot\!\!>_k<\!\!s'\!\!>_s<\!\!h_1',h_2',H'\!\!>_h<\!\!N_1',N_2'\!\!>_r)\land q)$
where $A$ is rely set;  $B$ is key set; $p$, $q$ and the resource invariants in $\Gamma$, do not mention resource names. We say that an assertion is well-formed if $\Gamma$ is well-formed resource context, $\mathbf{free}(p)\cup \mathbf{free}(q)\subseteq A$, $\mathbf{free}(k)\subseteq \mathbf{owned}(\Gamma)\cup A$, and $B\subseteq\mathbf{dom}(s)$.
\end{defi}
The following inference rules will restrict which identifiers $k$ can write and read. $k$ can only read and write the identifier protected by resource $r$ inside a critical region of $r$.
For the sake of convenience in writing, $<\!\!k\!\!>_k<\!\!s\!\!>_s<\!\!h_1,h_2,H\!\!>_h<\!\!N_1,N_2\!\!>_r\land p$ is abbreviated as $<\!\!k\!\!>_k LS\land p$ where $LS=<\!\!s\!\!>_s<\!\!h_1,h_2,H\!\!>_h<\!\!N_1,N_2\!\!>_r$. We also use the following notation convention:
\begin{align*}
  LS[s\mid i\mapsto v]&\;\;\;\; instead\;\; of\;\;\;\; <\!\![s\mid i\mapsto v]\!\!>_s<\!\!h_1,h_2,H\!\!>_h<\!\!N_1,N_2\!\!>_r\\
  LS[h_1\mid l\mapsto v]&\;\;\;\; instead\;\; of\;\;\;\; <\!\!s\!\!>_s<\!\![h_1\mid l\mapsto v],h_2,H\!\!>_h<\!\!N_1,N_2\!\!>_r\\
  LS[h_1\setminus l]&\;\;\;\; instead\;\; of\;\;\;\; <\!\!s\!\!>_s<\!\![h_1\setminus l],h_2,H\!\!>_h<\!\!N_1,N_2\!\!>_r\\
  LS[s\setminus Y]&\;\;\;\; instead\;\; of\;\;\;\; <\!\![s\setminus Y]\!\!>_s<\!\!h_1,h_2,H\!\!>_h<\!\!N_1,N_2\!\!>_r
\end{align*}
\subsection*{Inference rules}
\begin{itemize}
\item SKIP
  $$\dfrac{\cdot}{\Gamma \vdash_{A,B} \exists X(o=<\!\!\mathbf{skip}\!\!>_k LS \land p)\Downarrow \exists X(o=<\!\!\cdot\!\!>_k LS \land p)}$$
  if $\mathbf{free}(p)\subseteq A$
\item ASSIGNMENT
  $$\dfrac{\Gamma \vdash_{A,B} \exists X(o=<\!\!\rho_e\!\!>_k LS\land p[e/i])\Downarrow \exists X(o=<\!\!v\!\!>_k LS\land p[e/i])}{\Gamma \vdash_{A,B} \exists X(o=<\!\!i:=e\!\!>_k LS\land p[e/i])\Downarrow \exists X(o=<\!\!\cdot\!\!>_k LS[s\mid i\mapsto v]\land p)}$$
   if $i\notin \mathbf{owned}(\Gamma)$ and $\mathbf{free}(e)\subseteq A$
\item SEQUENCE
  \begin{align*}
   &\Gamma \vdash_{A_1,B} \exists X_1(o=<\!\!k_1\!\!>_k LS_1\land p_1)\Downarrow\exists X_2(o=<\!\!\cdot\!\!>_k LS_2 \land p_2)\\
   &\Gamma \vdash_{A_2,B} \exists X_2(o=<\!\!k_2\!\!>_k LS_2 \land p_2)\Downarrow\exists X_3(o=<\!\!\cdot\!\!>_k LS_3 \land p_3)\\
   &\overline{\Gamma \vdash_{A_1\cup A_2, B} \exists X_1 (o=<\!\!k_1;k_2\!\!>_k LS_1\land p_1)\Downarrow\exists X_3 (o=<\!\!\cdot\!\!>_k LS_3\land p_3)}
\end{align*}
where $LS_1=<\!\!s\!\!>_s<\!\!h_1,h_2,H\!\!>_h<\!\!N_1,N_2\!\!>_r$, $LS_2=<\!\!s'\!\!>_s<\!\!h_1',h_2,H'\!\!>_h<\!\!N_1',N_2\!\!>_r$,
$LS_3=<\!\!s''\!\!>_s<\!\!h_1'',h_2,H''\!\!>_h<\!\!N_1'',N_2\!\!>_r$
\item CONDITIONAL
\begin{align*}
   &\Gamma \vdash_{A,B} \exists X_1(o=<\!\!k_1\!\!>_k LS_1\land p\land b )\Downarrow\exists X_2(o=<\!\!\cdot\!\!>_k LS_2\land q)\\
   &\Gamma \vdash_{A,B} \exists X_1(o=<\!\!k_2\!\!>_k LS_1\land p\land \neg b)\Downarrow\exists X_2(o=<\!\!\cdot\!\!>_k LS_2\land q)\\
   &\overline{\Gamma \vdash_{A, B} \exists X_1 (o=<\!\!\mathbf{if}\;b\;\mathbf{then}\;k_1\;\mathbf{else}\;k_2\!\!>_k LS_1\land p)\Downarrow\exists X_2 (o=<\!\!\cdot\!\!>_k LS_2\land q)}
\end{align*}
where $LS_1=<\!\!s\!\!>_s<\!\!h_1,h_2,H\!\!>_h<\!\!N_1,N_2\!\!>_r$, $LS_2=<\!\!s'\!\!>_s<\!\!h_1',h_2,H'\!\!>_h<\!\!N_1',N_2\!\!>_r$
\item LOOP
$$\dfrac{\Gamma \vdash_{A,B} \exists X(o=<\!\!k\!\!>_k LS\land p \land b)\Downarrow \exists X(o=<\!\!\cdot\!\!>_k LS\land p)}{\Gamma \vdash_{A,B} \exists X(o=<\!\!\mathbf{while}\;b\;\mathbf{do}\;k\!\!>_k LS\land p)\Downarrow \exists X(o=<\!\!\cdot\!\!>_k LS\land p\land \neg b)}$$
\item PARALLEL
\begin{align*}
   &\Gamma \vdash_{A_1,B_1} \exists X_1(o=<\!\!k_1\!\!>_k LS_1\land p_1)\Downarrow\exists X_2(o=<\!\!\cdot\!\!>_k LS_1'\land q_1)\\
   &\Gamma \vdash_{A_2,B_2} \exists X_1(o=<\!\!k_2\!\!>_k LS_2\land p_2)\Downarrow\exists X_2(o=<\!\!\cdot\!\!>_k LS_2'\land q_2)\\
   &\overline{\Gamma \vdash_{A_1\cup A_2,B_1\cup B_2} \exists X_1 (o=<\!\!k_1 \parallel k_2\!\!>_k LS\land (p_1*p_2))\Downarrow\exists X_2 (o=<\!\!\cdot\!\!>_k LS_f\land (q_1* q_2))}
\end{align*}
where $LS=<\!\!s\!\!>_s<\!\!h_1\cdot h_2, h_3,H\!\!>_h<\!\!N_1\cup N_2, N_3\!\!>_r$ ,$LS_1=<\!\!s\!\!>_s<\!\!h_1,h_2\cdot h_3,H\!\!>_h<\!\!N_1,N_2\cup N_3\!\!>_r$, $LS_1'=<\!\!s'\!\!>_s<\!\!h_1',h_2'\cdot h_3',H'\!\!>_h<\!\!N_1',N_2'\cup N_3'\!\!>_r$, $LS_2=<\!\!s\!\!>_s<\!\!h_2,h_1\cdot h_3,H\!\!>_h<\!\!N_2,N_1\cup N_3\!\!>_r$,$LS_2'=<\!\!s'\!\!>_s<\!\!h_2',h_1'\cdot h_3',H'\!\!>_h<\!\!N_2',N_1'\cup N_3'\!\!>_r$,
$LS_f=<\!\!s'\!\!>_s<\!\!h_1'\cdot h_2',h_3',H'\!\!>_h<\!\!N_1'\cup N_2',N_3'\!\!>_r$
and $\mathbf{mod}(k_1) \cap A_2=\mathbf{mod}(k_1) \cap B_2=\mathbf{mod}(k_2) \cap A_1 =\mathbf{mod}(k_2) \cap B_1 = \emptyset$ and $N_1\cap N_2=\emptyset$
\item ENVIRONMENT MOVES
$$\dfrac{\Gamma \vdash_{A_2,B_2} \exists X(o=<\!\!k_2\!\!>_k LS_2\land p_2)\Downarrow \exists X'(o=<\!\!\cdot\!\!>_k LS_2'\land q_2)}{\Gamma \vdash_{A_1,B_1} \exists X'(o=<\!\!k_1\!\!>_k LS_1\land p_1)\Downarrow \exists X'(o=<\!\!k_1\!\!>_k LS_1'\land p_1)}$$
where $LS_1=<\!\!s\!\!>_s<\!\!h_1,h_2,H\!\!>_h<\!\!N_1,N_2\!\!>_r$, $LS_1'=<\!\!s'\!\!>_s<\!\!h_1,h_2',H'\!\!>_h<\!\!N_1,N_2'\!\!>_r$, $LS_2=<\!\!s\!\!>_s<\!\!h_2,h_1,H\!\!>_h<\!\!N_2,N_1\!\!>_r$,$LS_2'=<\!\!s'\!\!>_s<\!\!h_2',h_1,H'\!\!>_h<\!\!N_2',N_1\!\!>_r$
and $\mathbf{writes}(k_2) \cap A_1 =\mathbf{writes}(k_2) \cap B_1 = \emptyset$
\item REGION
$$\dfrac{\Gamma \vdash_{A\cup Y,B} \exists X(o=<\!\!k\!\!>_k LS_1\land(( p \land b)*R))\Downarrow \exists X'(o=<\!\!\cdot\!\!>_k LS_2\land (q*R))}{\Gamma,r(Y):R \vdash_{A,B} \exists X(o=<\!\!\mathbf{with}\;r\;\mathbf{when}\;b\;\mathbf{do}\;k\!\!>_k LS_1\land p)\Downarrow \exists X'(o=<\!\!\cdot\!\!>_k LS_2\land q)}$$
where $LS_1=<\!\!s\!\!>_s<\!\!h_1,h_2,H\!\!>_h<\!\!N_1,N_2\!\!>_r$, $LS_2=<\!\!s'\!\!>_s<\!\!h_1',h_2',H'\!\!>_h<\!\!N_1',N_2'\!\!>_r$
\item RESOURCE
$$\dfrac{\Gamma,r(Y):R \vdash_{A,B} \exists X(o=<\!\!k\!\!>_k LS_1'\land p )\Downarrow \exists X'(o=<\!\!\cdot\!\!>_k LS_2'\land q)}{\Gamma \vdash_{A\cup Y,B} \exists X(o=<\!\!\mathbf{resource}\;r\;\mathbf{in}\;k\!\!>_k LS_1\land (p*R))\Downarrow \exists X'(o=<\!\!\cdot\!\!>_k LS_2\land (q*R))}$$
where $LS_1=<\!\!s\!\!>_s<\!\!h_1\cdot h,h_2,H\!\!>_h<\!\!N_1,N_2\!\!>_r$, $LS_2=<\!\!s'\!\!>_s<\!\!h_1'\cdot h',h_2',H'\!\!>_h<\!\!N_1',N_2'\!\!>_r$, $LS_1'=<\!\!s\!\!>_s<\!\!h_1,h_2,H\cdot h\!\!>_h<\!\!N_1,N_2\!\!>_r$, $LS_2'=<\!\!s'\!\!>_s<\!\!h_1',h_2',H'\cdot h'\!\!>_h<\!\!N_1',N_2'\!\!>_r$.
\item LOOKUP
\begin{align*}
   &\Gamma \vdash_{A,B} \exists X(o=<\!\!\rho_e\!\!>_k LS\land p)\Downarrow \exists X(o=<\!\!l\!\!>_k LS\land p)\\
   &\Gamma \vdash_{A,B} \exists X(o=<\!\![l]=v\!\!>_k LS\land p)\Downarrow\exists X(o=<\!\!v\!\!>_k LS\land p)\\
   &\overline{\Gamma \vdash_{A,B} \exists X (o=<\!\!i:=[e]\!\!>_k LS\land p[v/i])\Downarrow\exists X (o=<\!\!\cdot\!\!>_k LS[s\mid i\mapsto v]\land p)}
\end{align*}
if
$i\notin \mathbf{owned}(\Gamma)$ and $\mathbf{free}(e)\subseteq A$
\item UPDATE
\begin{align*}
   &\Gamma \vdash_{A,B} \exists X(o=<\!\!\rho_e\!\!>_k LS\land p)\Downarrow \exists X(o=<\!\!l\!\!>_k LS\land p)\\
   &\Gamma \vdash_{A,B} \exists X(o=<\!\!\rho_{e'}\!\!>_k LS\land p)\Downarrow\exists X(o=<\!\!v\!\!>_k LS\land p)\\
   &\overline{\Gamma \vdash_{A,B} \exists X (o=<\!\![e]:=e'\!\!>_k LS\land p)\Downarrow\exists X (o=<\!\!\cdot\!\!>_k LS[h_1\mid l\mapsto v]\land p)}
\end{align*}
\item ALLOCATION
  $$\dfrac{\cdot}{\Gamma \vdash_{A,B} \exists X(o\!=<\!\!\mathbf{alloc}(l,[v_0,\!\cdots\!,v_n])\!\!>_k\! LS\!\land\! p)\!\Downarrow \!\exists X(o\!=<\!\!l\!\!>_k \!LS[h_1 \mid l\mapsto v_0,\!\cdots\!,l\!+\!n\!\mapsto\!v_n]\!\land \!p)}$$
\item DISPOSAL
$$\dfrac{\Gamma \vdash_{A,B} \exists X(o=<\!\!\rho_e\!\!>_k LS\land p)\Downarrow \exists X(o=<\!\!l\!\!>_k LS\land p)}{\Gamma \vdash_{A,B} \exists X(o=<\!\!\mathbf{dispose}\;e\!\!>_k LS\land p)\Downarrow \exists X(o=<\!\!\cdot\!\!>_kLS[h_1-\{l\mapsto-\}]\land p)}$$
\item FRAME
$$\dfrac{\Gamma \vdash_{A,B} \exists X(o=<\!\!k\!\!>_k LS_1\land p)\Downarrow \exists X'(o=<\!\!\cdot\!\!>_k LS_2\land q)}{\Gamma \vdash_{A\cup \mathbf{free}(R),B} \exists X(o=<\!\!k\!\!>_k LS_1\land (p*R))\Downarrow \exists X'(o=<\!\!\cdot\!\!>_k LS_2\land (q*R))}$$
where $LS_1=<\!\!s\!\!>_s<\!\!h_1,h_2,H\!\!>_h<\!\!N_1,N_2\!\!>_r$, $LS_2=<\!\!s'\!\!>_s<\!\!h_1',h_2',H'\!\!>_h<\!\!N_1',N_2'\!\!>_r$ and $\mathbf{mod}(c)\cap \mathbf{free}(R)=\emptyset$
\item CONSEQUENCE
$$\dfrac{\Gamma \vdash_{A,B} \exists X(o=<\!\!k\!\!>_k LS_1\land p)\Downarrow \exists X'(o=<\!\!\cdot\!\!>_k LS_2\land q)}{\Gamma \vdash_{A',B} \exists X(o=<\!\!k\!\!>_k LS_1\land p')\Downarrow \exists X'(o=<\!\!\cdot\!\!>_k LS_2\land q')}$$
where $LS_1=<\!\!s\!\!>_s<\!\!h_1,h_2,H\!\!>_h<\!\!N_1,N_2\!\!>_r$, $LS_2=<\!\!s'\!\!>_s<\!\!h_1',h_2',H'\!\!>_h<\!\!N_1',N_2'\!\!>_r$
and $A\subseteq A'$, $\exists X(o=<\!\!k\!\!>_k LS_1\land p')\Rightarrow \exists X(o=<\!\!k\!\!>_k LS_1\land p)$, $\exists X'(o=<\!\!\cdot\!\!>_k LS_2\land q)\Rightarrow \exists X'(o=<\!\!\cdot\!\!>_k LS_2\land q')$
\item AUXILIARY
$$\dfrac{\Gamma \vdash_{A\cup Y,B\cup Y} \exists X(o=<\!\!k\!\!>_k LS_1\land p)\Downarrow \exists X'(o=<\!\!\cdot\!\!>_k LS_2\land q)}{\Gamma \vdash_{A,B} \exists X(o=<\!\!k\setminus Y \!\!>_k LS_1[s\setminus Y]\land p)\Downarrow \exists X'(o=<\!\!\cdot\!\!>_k LS_2[s\setminus Y]\land q)}$$
where $LS_1=<\!\!s\!\!>_s<\!\!h_1,h_2,H\!\!>_h<\!\!N_1,N_2\!\!>_r$, $LS_2=<\!\!s'\!\!>_s<\!\!h_1',h_2',H'\!\!>_h<\!\!N_1',N_2'\!\!>_r$
if $Y$ is a auxiliary for $k$, $Y\cap (\mathbf{free}(p)\cup \mathbf{free}(q))=\emptyset$, $Y\cap \mathbf{owned}(\Gamma)=\emptyset$
\end{itemize}
\subsection*{Examples}
We now discuss some example programs and assertions, to illustrate the way the inference rules work.
\begin{enumerate}
\item \emph{The rely set}\\
The assertions\\
(a)\;\;$r(\{a,x\}):x=a \land \mathbf{emp}\vdash_{\{a,t\},\{x,t\}}\exists \{t_v,x_v,a_v\}(o=<\!\!\mathbf{with}\;r\;\mathbf{when}\;true\;\mathbf{do}\;t:=x\!\!>_k<\!\!x\mapsto x_v,t\mapsto t_v,a\mapsto a_v\!\!>_s<\!\!\cdot\!\!>_h<\!\!\cdot\!\!>_r\land \mathbf{emp})\Downarrow\exists \{t_v,x_v,a_v\}(o=<\!\!\cdot\!\!>_k<\!\!x\mapsto x_v,t\mapsto x_v, a\mapsto a_v\!\!>_s<\!\!\cdot\!\!>_h<\!\!\cdot\!\!>_r\land t=a\land \mathbf{emp})$\\
  (b)\;\;$r(\{a,x\}):x=a \land \mathbf{emp}\vdash_{\{a,t\},\{x,t\}}\exists \{t_v,x_v,a_v\}(o=<\!\!\mathbf{with}\;r\;\mathbf{when}\;true\;\mathbf{do}\;x:=t\!\!>_k<\!\!x\mapsto x_v,t\mapsto x_v,a\mapsto a_v\!\!>_s<\!\!\cdot\!\!>_h<\!\!\cdot\!\!>_r\land t=a\land \mathbf{emp})\Downarrow\exists \{t_v,x_v,a_v\}(o=<\!\!\cdot\!\!>_k<\!\!x\mapsto x_v,t\mapsto x_v,a\mapsto a_v\!\!>_s<\!\!\cdot\!\!>_h<\!\!\cdot\!\!>_r\land \mathbf{emp})$\\
are valid and well-formed. Each is also provable from ASSIGNMENT, CONSEQUENCE and REGION.\\
  Let $c_1$ be $\mathbf{with}\;r\;\mathbf{when}\;true\;\mathbf{do}\;t:=x; \mathbf{with}\;r\;\mathbf{when}\;true\;\mathbf{do}\;x:=t$. The assertion\\
  (c)\;\;$r(\{a,x\}):x=a \land \mathbf{emp}\vdash_{\{a,t\},\{x,t\}}\exists \{t_v,x_v,a_v\}(o=<\!\!c_1\!\!>_k<\!\!x\mapsto x_v,t\mapsto t_v, a\mapsto a_v\!\!>_s<\!\!\cdot\!\!>_h<\!\!\cdot\!\!>_r\land \mathbf{emp})\Downarrow\exists \{t_v,x_v,a_v\}(o=<\!\!\cdot\!\!>_k<\!\!x\mapsto x_v,t\mapsto x_v, a\mapsto a_v\!\!>_s<\!\!\cdot\!\!>_h<\!\!\cdot\!\!>_r\land \mathbf{emp})$\\
  is valid, well-formed, and provable from (a) and (b) using SEQUENCE. Let $c_1$ be as above and let $c_2$ be $\mathbf{with}\;r\;\mathbf{when}\;true\;\mathbf{do}\;(x:=x+1;a:=a+1)$. The assertion\\
(d)\;\;$r(\{a,x\}):x=a \land \mathbf{emp}\vdash_{A,\{x,t,a\}}\exists \{t_v,x_v,a_v\}(o=<\!\!c_1\parallel c_2\!\!>_k<\!\!x\mapsto x_v,t\mapsto t_v, a\mapsto a_v\!\!>_s<\!\!\cdot\!\!>_h<\!\!\cdot\!\!>_r\land \mathbf{emp})\Downarrow\exists \{t_v',x_v',a_v'\}(o=<\!\!\cdot\!\!>_k<\!\!x\mapsto x_v',t\mapsto t_v', a\mapsto a_v'\!\!>_s<\!\!\cdot\!\!>_h<\!\!\cdot\!\!>_r\land \mathbf{emp})$\\
  is not valid because of  executing $c_1 \parallel c_2$ without interference does not necessarily preserve equality of $x$ and $a$. Moreover, the assertion (d) is not provable. Suppose\\
  $r(\{a,x\}):x=a \land \mathbf{emp}\vdash_{A_1,B_1}\exists \{t_v,x_v,a_v\}(o=<\!\!c_1\!\!>_k<\!\!x\mapsto x_v,t\mapsto t_v,a\mapsto a_v\!\!>_s<\!\!\cdot\!\!>_h<\!\!\cdot\!\!>_r\land \mathbf{emp})\Downarrow\exists \{t_v',x_v',a_v'\}(o=<\!\!\cdot\!\!>_k<\!\!x\mapsto x_v',t\mapsto t_v', a\mapsto a_v'\!\!>_s<\!\!\cdot\!\!>_h<\!\!\cdot\!\!>_r\land \mathbf{emp})$\\
  $r(\{a,x\}):x=a \land \mathbf{emp}\vdash_{A_2,B_2}\exists \{t_v,x_v,a_v\}(o=<\!\!c_2\!\!>_k<\!\!x\mapsto x_v,t\mapsto t_v,a\mapsto a_v\!\!>_s<\!\!\cdot\!\!>_h<\!\!\cdot\!\!>_r\land \mathbf{emp})\Downarrow\exists \{t_v',x_v',a_v'\}(o=<\!\!\cdot\!\!>_k<\!\!x\mapsto x_v',t\mapsto t_v',a\mapsto a_v'\!\!>_s<\!\!\cdot\!\!>_h<\!\!\cdot\!\!>_r\land \mathbf{emp})$\\
  $A_1$ would have to contain $x$ or $a$, but $c_2$ modifies both of these variables, so the side condition on the PARALLEL rule would fail. This examples is the counterexample found by Ian Wehrman and Josh Berdine showing that, without the rely set, the assertion (d) is is provable in the original concurrent separation logic but not valid.
\item \emph{Auxiliary variable}\\
   $r(\{x,a,b\}):x=a+b \land\mathbf{emp} \vdash_{\{a\},\{a\}}\exists \{x_v,a_v,b_v\}(o=<\!\!\mathbf{with}\;r\;\mathbf{when}\;true\;\mathbf{do}\;x:=x+1;a:=a+1\!\!>_k<\!\!x\mapsto x_v,a\mapsto a_v,b\mapsto b_v\!\!>_s<\!\!\cdot\!\!>_h<\!\!\cdot\!\!>_r\land a=0\land\mathbf{emp})\Downarrow\exists \{x_v,a_v,b_v\}(o=<\!\!\cdot\!\!>_k<\!\!x\mapsto x_v+2,a\mapsto a_v+1,b\mapsto b_v+1\!\!>_s<\!\!\cdot\!\!>_h<\!\!\cdot\!\!>_r\land a=1\land\mathbf{emp})$\\
  is valid, and provable from REGION and ENVIRONMENT MOVE, because\\
$\vdash_{\{x,a,b\},\{a\}}\exists \{x_v,a_v,b_v\}(o=<\!\!x:=x+1;a:=a+1\!\!>_k<\!\!x\mapsto x_v,a\mapsto a_v,b\mapsto b_v\!\!>_s<\!\!\cdot\!\!>_h<\!\!\cdot\!\!>_r\land (x=a+b\land \mathbf{emp})*(a=0\land\mathbf{emp}))\Downarrow\exists \{x_v,a_v,b_v\}(o=<\!\!\cdot\!\!>_k<\!\!x\mapsto x_v+1,a\mapsto a_v+1,b\mapsto b_v\!\!>_s<\!\!\cdot\!\!>_h<\!\!\cdot\!\!>_r\land (x=a+b\land \mathbf{emp})*(a=1\land\mathbf{emp}))$\\
  is provable from ASSIGNMENT, SEQUENCE and  CONSEQUENCE.\\
 Similarly we can prove\\
 $r(\{x,a,b\}):x=a+b \land\mathbf{emp} \vdash_{\{b\},\{b\}}\exists \{x_v,a_v,b_v\}(o=<\!\!\mathbf{with}\;r\;\mathbf{when}\;true\;\mathbf{do}\;x:=x+1;b:=b+1\!\!>_k<\!\!x\mapsto x_v,a\mapsto a_v,b\mapsto b_v\!\!>_s<\!\!\cdot\!\!>_h<\!\!\cdot\!\!>_r\land b=0\land\mathbf{emp})\Downarrow\exists \{x_v,a_v,b_v\}(o=<\!\!\cdot\!\!>_k<\!\!x\mapsto x_v+2,a\mapsto a_v+1,b\mapsto b_v+1\!\!>_s<\!\!\cdot\!\!>_h<\!\!\cdot\!\!>_r\land b=1\land\mathbf{emp})$\\
  Using PARALLEL and CONSEQUENCE we can then derive\\
  $r(\{x,a,b\}):x=a+b \land\mathbf{emp} \vdash_{\{a,b\},\{a,b\}}\exists \{x_v,a_v,b_v\}(o=<\!\!(\mathbf{with}\;r\;\mathbf{when}\;true\;\mathbf{do}\;x:=x+1;a:=a+1)\parallel(\mathbf{with}\;r\;\mathbf{when}\;true\;\mathbf{do}\;x:=x+1;b:=b+1)\!\!>_k<\!\!x\mapsto x_v,a\mapsto a_v,b\mapsto b_v\!\!>_s<\!\!\cdot\!\!>_h<\!\!\cdot\!\!>_r\land a=0\land b=0\land\mathbf{emp})\Downarrow\exists \{x_v,a_v,b_v\}(o=<\!\!\cdot\!\!>_k<\!\!x\mapsto x_v+2,a\mapsto a_v+1,b\mapsto b_v+1\!\!>_s<\!\!\cdot\!\!>_h<\!\!\cdot\!\!>_r\land a=1\land b=1\land\mathbf{emp})$\\
  which is also valid. Using RESOURCE and CONSEQUENCE we then obtain\\
  $\vdash_{\{x,a,b\},\{a,b\}}\exists \{x_v,a_v,b_v\}(o=<\!\!\mathbf{resource}\;r\;\mathbf{in}\;((\mathbf{with}\;r\;\mathbf{when}\;true\;\mathbf{do}\;x:=x+1;a:=a+1)\parallel(\mathbf{with}\;r\;\mathbf{when}\;true\;\mathbf{do}\;x:=x+1;b:=b+1))\!\!>_k<\!\!x\mapsto x_v,a\mapsto a_v,b\mapsto b_v\!\!>_s<\!\!\cdot\!\!>_h<\!\!\cdot\!\!>_r\land x=a+b \land a=0\land b=0\land\mathbf{emp})\Downarrow\exists \{x_v,a_v,b_v\}(o=<\!\!\cdot\!\!>_k<\!\!x\mapsto x_v+2,a\mapsto a_v+1,b\mapsto b_v+1\!\!>_s<\!\!\cdot\!\!>_h<\!\!\cdot\!\!>_r\land x=a+b\land a=1\land b=1\land\mathbf{emp})$\\
  By ASSIGNMENT, SEQUENCE and CONSEQUENCE we then have\\
  $\vdash_{\{x,a,b\},\{a,b\}}\exists \{x_v,a_v,b_v\}(o=<\!\!a:=0\; ; b:=0;\;\mathbf{resource}\;r\;\mathbf{in}\;((\mathbf{with}\;r\;\mathbf{when}\;true\;\mathbf{do}$\\$x:=x+1;a:=a+1)\parallel(\mathbf{with}\;r\;\mathbf{when}\;true\;\mathbf{do}\;x:=x+1;b:=b+1))\!\!>_k<\!\!x\mapsto x_v,a\mapsto a_v,b\mapsto b_v\!\!>_s<\!\!\cdot\!\!>_h<\!\!\cdot\!\!>_r\land x=0\land\mathbf{emp})\Downarrow\exists \{x_v,a_v,b_v\}(o=<\!\!\cdot\!\!>_k<\!\!x\mapsto x_v+2,a\mapsto a_v+1,b\mapsto b_v+1\!\!>_s<\!\!\cdot\!\!>_h<\!\!\cdot\!\!>_r\land x=2\land\mathbf{emp})$\\
  Finally, since $a,b$ is an auxiliary variable set for this program, and $\{a,b\}\cap \{x\}=\emptyset$ and $\{a,b\}\cap \emptyset=\emptyset$,  we can use the AUXILIARY rule to obtain\\
  $\vdash_{\{x\},\{\}}\exists \{x_v,a_v,b_v\}(o=<\!\!\mathbf{resource}\;r\;\mathbf{in}\;((\mathbf{with}\;r\;\mathbf{when}\;true\;\mathbf{do}\;x:=x+1)\parallel(\mathbf{with}\;r\;\mathbf{when}\; true\;\mathbf{do}\;x:=x+1))\!\!>_k<\!\!x\mapsto x_v\!\!>_s<\!\!\cdot\!\!>_h<\!\!\cdot\!\!>_r\land x=0\land\mathbf{emp})\Downarrow\exists \{x_v,a_v,b_v\}(o=<\!\!\cdot\!\!>_k<\!\!x\mapsto x_v+2\!\!>_s<\!\!\cdot\!\!>_h<\!\!\cdot\!\!>_r\land x=2\land\mathbf{emp})$
\end{enumerate}
\section{Semantic}
\subsection*{Local state}
From the perspective of the interaction between a process and its environment, A state $<\!\!s\!\!>_s<\!\!h\!\!>_h<\!\!N\!\!>_r$
can be expressed as $<\!\!s\!\!>_s<\!\!h_1,h_2,H\!\!>_h<\!\!N_1,N_2\!\!>_r$, where $h_1,N_1$ represents part of the heap and part of the resource owned by the process, $h_2,N_2$ is owned by the environment, and $H$ represents the remaining heap and satisfies the resource invariants of the currently available resources. Obviously, $h=h_1\cdot h_2 \cdot H$ and $N=N_1\cup N_2$.
For a well-formed resource context $\Gamma$ and a set of resource names $N$, let $\Gamma\upharpoonright N=\{ r(X):R \in \Gamma | r \in N \}$ and $\Gamma \setminus N=\{ r(X):R \in \Gamma | r \notin N \}$.
\begin{defi}
We say $<\!\!s\!\!>_s<\!\!h_1,h_2,H\!\!>_h<\!\!N_1,N_2\!\!>_r$ is a local state for $\Gamma$ if:
\begin{itemize}
  \item $h_1\bot h_2$, $h_1\bot H$ and $h_2\bot H$;
  \item $N_1 \cap N_2=\emptyset$, $s(r)=0 $ for $r\in N_1\cup N_2$, $s(r)=1 $ otherwise;
  \item $<\!\!s\!\!>_s<\!\!H\!\!>_h<\!\!N_1,N_2\!\!>_r\models \mathbf{inv}(\Gamma \setminus (N_1\cup N_2))$
\end{itemize}
\end{defi}
Let $\Sigma_\Gamma$ be the set of local states for $\Gamma$.
\subsection*{Semantic rules for Actions}
Let$<\!\!s\!\!>_s<\!\!h_1,h_2,H\!\!>_h<\!\!N_1,N_2\!\!>_r\in \Sigma_\Gamma$, $A$ is rely set, $B$ is key set, the semantic rules for actions are as follows:
\begin{itemize}
\item $<\!\!\delta\!\!>_k<\!\!s\!\!>_s<\!\!h_1,h_2,H\!\!>_h<\!\!N_1,N_2\!\!>_r\xrightarrow[\Gamma, A,B]{}<\!\!\cdot\!\!>_k<\!\!s\!\!>_s<\!\!h_1,h_2,H\!\!>_h<\!\!N_1,N_2\!\!>_r$
\item $<\!\!i=v\!\!>_k<\!\!s\!\!>_s<\!\!h_1,h_2,H\!\!>_h<\!\!N_1,N_2\!\!>_r\xrightarrow[\Gamma, A,B]{}<\!\!v\!\!>_k<\!\!s\!\!>_s<\!\!h_1,h_2,H\!\!>_h<\!\!N_1,N_2\!\!>_r$\\
   if $s(i)=v$\;\; and\;\; $i\in \mathbf{owend}(\Gamma\upharpoonright N_1)\cup A$
\item $<\!\!i=v\!\!>_k<\!\!s\!\!>_s<\!\!h_1,h_2,H\!\!>_h<\!\!N_1,N_2\!\!>_r\xrightarrow[\Gamma, A,B]{}\mathbf{abort}$ \;\;
   if $i\notin \mathbf{owend}(\Gamma\upharpoonright N_1)\cup A$
\item $<\!\![l]=v\!\!>_k<\!\!s\!\!>_s<\!\!h_1,h_2,H\!\!>_h<\!\!N_1,N_2\!\!>_r\xrightarrow[\Gamma, A, B]{}<\!\!v\!\!>_k<\!\!s\!\!>_s<\!\!h_1,h_2,H\!\!>_h<\!\!N_1,N_2\!\!>_r$\\
if $h_1(l)=v$
\item $<\!\![l]=v\!\!>_k<\!\!s\!\!>_s<\!\!h_1,h_2,H\!\!>_h<\!\!N_1,N_2\!\!>_r\xrightarrow[\Gamma, A, B]{}\mathbf{abort}$\;\;
if $l\notin \mathbf{dom}(h_1)$
\item $<\!\!i:=v\!\!>_k<\!\!s\!\!>_s<\!\!h_1,h_2,H\!\!>_h<\!\!N_1,N_2\!\!>_r\xrightarrow[\Gamma, A, B]{}<\!\!\cdot\!\!>_k<\!\![s\mid i\mapsto v]\!\!>_s<\!\!h_1,h_2,H\!\!>_h<\!\!N_1,N_2\!\!>_r$\\
if $i\notin \mathbf{owned}(\Gamma \setminus N_1)$
\item $<\!\!i:=v\!\!>_k<\!\!s\!\!>_s<\!\!h_1,h_2,H\!\!>_h<\!\!N_1,N_2\!\!>_r\xrightarrow[\Gamma, A, B]{} \mathbf{abort}$\;\;
if $i\in \mathbf{owned}(\Gamma \setminus N_1)$
\item $<\!\![l]:=v\!\!>_k<\!\!s\!\!>_s<\!\!h_1,h_2,H\!\!>_h<\!\!N_1,N_2\!\!>_r\xrightarrow[\Gamma, A, B]{}<\!\!\cdot\!\!>_k<\!\!s\!\!>_s<\!\![h_1\mid l\mapsto v],h_2,H\!\!>_h<\!\!N_1,N_2\!\!>_r$\;\;\;\;\;\;
if $l\in \mathbf{dom}(h_1)$
\item $<\!\![l]:=v\!\!>_k<\!\!s\!\!>_s<\!\!h_1,h_2,H\!\!>_h<\!\!N_1,N_2\!\!>_r\xrightarrow[\Gamma, A, B]{}\mathbf{abort}$ \;\;
if $l\notin \mathbf{dom}(h_1)$
\item $<\!\!\mathbf{alloc}(l,[v_0,\cdots,v_n])\!\!>_k<\!\!s\!\!>_s<\!\!h_1,h_2,H\!\!>_h<\!\!N_1,N_2\!\!>_r\xrightarrow[\Gamma, A, B]{}<\!\!l\!\!>_k<\!\!s\!\!>_s<\!\!h_1',h_2,H\!\!>_h<\!\!N_1,N_2\!\!>_r$\;\;if $\{l,l+1,\cdots,l+n\}\cap \mathbf{dom}(h_1\cdot h_2 \cdot H)=\emptyset$ and $h_1'=[h_1 \mid l\mapsto v_0,\cdots,l+n\mapsto v_n]$
\item $<\!\!\mathbf{dispose}\;l\!\!>_k<\!\!s\!\!>_s<\!\!h_1,h_2,H\!\!>_h<\!\!N_1,N_2\!\!>_r\xrightarrow[\Gamma, A, B]{}<\!\!\cdot\!\!>_k<\!\!s\!\!>_s<\!\!h_1\setminus l,h_2,H\!\!>_h<\!\!N_1,N_2\!\!>_r$\\
if $l\in \mathbf{dom}(h_1)$
\item $<\!\!\mathbf{dispose}\;l\!\!>_k<\!\!s\!\!>_s<\!\!h_1,h_2,H\!\!>_h<\!\!N_1,N_2\!\!>_r\xrightarrow[\Gamma, A, B]{}\mathbf{abort}$ \;\;
if $l\notin \mathbf{dom}(h_1)$
\item $<\!\!\mathbf{try}\;r\!\!>_k<\!\!s\!\!>_s<\!\!h_1,h_2,H\!\!>_h<\!\!N_1,N_2\!\!>_r\xrightarrow[\Gamma, A, B]{}<\!\!\cdot\!\!>_k<\!\!s\!\!>_s<\!\!h_1,h_2,H\!\!>_h<\!\!N_1,N_2\!\!>_r$ \\
if $r\in N_1 \cup N_2$
\item $<\!\!\mathbf{acq}\;r\!\!>_k<\!\!s\!\!>_s<\!\!h_1,h_2,H\!\!>_h<\!\!N_1,N_2\!\!>_r\xrightarrow[\Gamma, A, B]{}<\!\!\cdot\!\!>_k<\!\![s\mid r\mapsto0]\!\!>_s<\!\!h_1\cdot h_r,h_2,H-h_r\!\!>_h<\!\!N_1\cup\{r\},N_2\!\!>_r$\;\;if $r \notin N_1\cup N_2$ , $h_r \subseteq H$ and $<\!\!s\!\!>_s<\!\!h_r\!\!>_h<\!\!\{\}\!\!>_r\models \mathbf{inv}(\Gamma\upharpoonright r)$
\item $<\!\!\mathbf{rel}\;r\!\!>_k<\!\!s\!\!>_s<\!\!h_1,h_2,H\!\!>_h<\!\!N_1,N_2\!\!>_r\xrightarrow[\Gamma, A, B]{}<\!\!\cdot\!\!>_k<\!\![s\mid r\mapsto1]\!\!>_s<\!\!h_1- h_r,h_2,H\cup h_r\!\!>_h<\!\!N_1-\{r\},N_2\!\!>_r$\;\;if $r \in N_1$ , $h_r \subseteq h_1$ and $<\!\!s\!\!>_s<\!\!h_r\!\!>_h<\!\!\{\}\!\!>_r\models \mathbf{inv}(\Gamma\upharpoonright r)$
\item $<\!\!\mathbf{rel}\;r\!\!>_k<\!\!s\!\!>_s<\!\!h_1,h_2,H\!\!>_h<\!\!N_1,N_2\!\!>_r\xrightarrow[\Gamma, A, B]{}\mathbf{abort}$ \;\;\;\;\;\;
if $\forall h_r\subseteq h_1 \;\;<\!\!s\!\!>_s<\!\!h_r\!\!>_h<\!\!\{\}\!\!>_r \models \neg\mathbf{inv}(\Gamma\upharpoonright r)$
\end{itemize}
The semantics rules describe the process's execution of action $\lambda$ and its impact on the ownership of heap and resources. The execution of $\lambda$ is legal only if the ownership rule is respected and the separation attribute is maintained. If the execution of $\lambda$ violates the rules, an $\mathbf{abort}$ result occurs. By swapping the roles of process and environment, we gain environment moves $\rightsquigarrow_{\Gamma, A, B}$, which respects $\Gamma$ and does not modify identifiers in $A$ and $B$.
\begin{defi}
$$<\!\!\lambda\!\!>_k<\!\!s\!\!>_s<\!\!h_1,h_2,H\!\!>_h<\!\!N_1,N_2\!\!>_r\rightsquigarrow_{\Gamma, A,B}<\!\!\lambda\!\!>_k<\!\!s'\!\!>_s<\!\!h_1,h_2',H'\!\!>_h<\!\!N_1,N_2'\!\!>_r$$
iff there is an action $\mu$ such that $\mathbf{writes}(\mu)\cap A =\emptyset$ and $\mathbf{writes}(\mu)\cap B =\emptyset$ and
$$<\!\!\mu\!\!>_k<\!\!s\!\!>_s<\!\!h_2,h_1,H\!\!>_h<\!\!N_2,N_1\!\!>_r\xrightarrow[\Gamma, A', B']{}<\!\!\cdot\!\!>_k<\!\!s'\!\!>_s<\!\!h_2',h_1,H'\!\!>_h<\!\!N_2',N_1\!\!>_r$$
We then define
$$<\!\!\lambda\!\!>_k<\!\!s\!\!>_s<\!\!h_1,h_2,H\!\!>_h<\!\!N_1,N_2\!\!>_r\xLongrightarrow[\Gamma, A, B]{}<\!\!\cdot\!\!>_k<\!\!s'\!\!>_s<\!\!h_1',h_2',H'\!\!>_h<\!\!N_1',N_2'\!\!>_r$$
iff
$$<\!\!\lambda\!\!>_k<\!\!s\!\!>_s<\!\!h_1,h_2,H\!\!>_h<\!\!N_1,N_2\!\!>_r\rightsquigarrow_{\Gamma, A' ,B'}^*<\!\!\lambda\!\!>_k<\!\!s''\!\!>_s<\!\!h_1,h_2'',H''\!\!>_h<\!\!N_1,N_2''\!\!>_r\xrightarrow[\Gamma, A, B]{}$$
$$<\!\!\cdot\!\!>_k<\!\!s'''\!\!>_s<\!\!h_1',h_2'',H'''\!\!>_h<\!\!N_1',N_2''\!\!>_r\rightsquigarrow_{\Gamma, A'', B''}^*<\!\!\cdot\!\!>_k<\!\!s'\!\!>_s<\!\!h_1',h_2',H'\!\!>_h<\!\!N_1',N_2'\!\!>_r$$
\end{defi}
\subsection*{Trace}
We call a non-empty finite or infinite sequence of actions as a trace. $\epsilon$ stands for empty trace. Let $i$ is an identifier, $l$ is an address, $r$ is a resource name, $\lambda$ is an action, $\alpha, \beta$ are traces, and $T, T_1, T_2$ are sets of traces. $\alpha\setminus r$ means to replace all the resource actions on $r$ in $\alpha$ with $\delta$. $\alpha\beta$ represents the trace obtained by connecting $\alpha$ and $\beta$. If $\alpha$ is infinite, then $\alpha\beta$ is also infinite. $T_1T_2$ is a set of traces connected by the trace in $T_1$ and the trace in $T_2$. $T^0=\{\delta\}$ and $T^{n+1}=T^nT$. The semantic rules extended to the trace are as follows.
\begin{itemize}
  \item $<\!\!\lambda \alpha\!\!>_k<\!\!s\!\!>_s<\!\!h_1,h_2,H\!\!>_h<\!\!N_1,N_2\!\!>_r\xrightarrow[\Gamma, A,B]{}<\!\!\alpha\!\!>_k<\!\!s'\!\!>_s<\!\!h_1',h_2,H'\!\!>_h<\!\!N_1',N_2\!\!>_r$\\
      if $<\!\!\lambda\!\!>_k<\!\!s\!\!>_s<\!\!h_1,h_2,H\!\!>_h<\!\!N_1,N_2\!\!>_r\xrightarrow[\Gamma, A,B]{}<\!\!\cdot\!\!>_k<\!\!s'\!\!>_s<\!\!h_1',h_2,H'\!\!>_h<\!\!N_1',N_2\!\!>_r$
  \item $<\!\!\lambda \alpha\!\!>_k<\!\!s\!\!>_s<\!\!h_1,h_2,H\!\!>_h<\!\!N_1,N_2\!\!>_r\xLongrightarrow[\Gamma, A, B]{}<\!\!\alpha\!\!>_k<\!\!s'\!\!>_s<\!\!h_1',h_2',H'\!\!>_h<\!\!N_1',N_2'\!\!>_r$\\
      if $<\!\!\lambda\!\!>_k<\!\!s\!\!>_s<\!\!h_1,h_2,H\!\!>_h<\!\!N_1,N_2\!\!>_r\xLongrightarrow[\Gamma, A, B]{}<\!\!\cdot\!\!>_k<\!\!s'\!\!>_s<\!\!h_1',h_2',H'\!\!>_h<\!\!N_1',N_2'\!\!>_r$
\end{itemize}
$T_{[i:1]}$ is the subset of $T$. If $\alpha\in T_{[i:1]}$, then
$$<\!\!\alpha\!\!>_k<\!\![s\mid i\mapsto1]\!\!>_s<\!\!h_1,h_2,H\!\!>_h<\!\!N_1,N_2\!\!>_r\xrightarrow[\Gamma, A,B]{}^*<\!\!\cdot\!\!>_k<\!\!s'\!\!>_s<\!\!h_1',h_2,H'\!\!>_h<\!\!N_1',N_2\!\!>_r$$
\subsection*{Semantics of expressions and commands}
Let $e=i_1+i_2$, iteratively using structure rules and semantics rules, we get
$$i_1+i_2=(i_1=v_1\curvearrowright \square+i_2)\rightharpoonup(v_1+i_2)\rightharpoonup(i_2=v_2\curvearrowright v_1+\square)\rightharpoonup(v_1+v_2)\rightarrow (v_1+_{Int}v_2)$$
If we do not express the structural rules explicitly, $i_1+i_2$ can be simplified to
$$i_1+i_2= (i_1=v_1)(i_2=v_2)\curvearrowright(v_1+_{Int}v_2)$$
where $(i_1=v_1)(i_2=v_2)$ is a trace, $(v_1+_{Int}v_2)$ is an integer value and $\curvearrowright$ a delimiter to split the trace and the integer value. In fact, K uses strictness attribute in order to avoid writing obvious structural rules.
Then, the semantic of expression $e_1+e_2$ is expressed as a trace paired with an integer value. Since the expression is pure, the only action involved in such a trace is read action ($i=v$).\\
Let $v, v_1, v_2, \ldots, v_n$ are integer values, $\alpha, \alpha_1, \alpha_2, \alpha_3, \rho, \rho_1, \rho_2$ are traces, $l$ is address, $\lambda, \mu$ are actions, $e, e_1, e_2, \ldots, e_n$ are integer expressions, $b$ is boolean expression, $i$ is identifier,  $E\!=\![e_1,e_2,\!\ldots\!, e_n]$ is list expression, $V=[v_1,v_2,\ldots, v_n]$ is list integer value and $[\![]\!]$ is the semantic function which is given by structural induction as follows.
\begin{align*}
  [\![6]\!]&=\delta\curvearrowright6\\
  [\![i]\!]&=i=v\curvearrowright v\\
  [\![e_1+e_2]\!]&=\rho_1\rho_2\curvearrowright (v_1+_{Int}v_2)\;\;\;\; where\; [\![e_1]\!]=\rho_1\curvearrowright v_1\; and\; [\![e_2]\!]=\rho_2\curvearrowright v_2\\
  [\![(e_1,e_2,\ldots,e_i,\ldots,e_n)]\!]&=\rho_1\ldots\rho_i\ldots\rho_n\curvearrowright[v_1,v_2,\ldots,v_i,\ldots,v_n]\;\;\;\;where\;[\![e_i]\!]=\rho_i\curvearrowright v_i
\end{align*}
Similarly,
\begin{align*}
  [\![true]\!]&=\delta\curvearrowright true\\
  [\![false]\!]&=\delta\curvearrowright false\\
  [\![b]\!]_{true}&=\rho_1\curvearrowright true\\
  [\![b]\!]_{false}&=\rho_2\curvearrowright false
\end{align*}
The command is relatively complicated and can be expressed as a set of traces, which can be either finite or infinite.
\begin{align*}
  [\![\mathbf{skip}]\!]&=\{\delta\}\\
  [\![i:=e]\!]&=\{\rho(i:=v)\mid[\![e]\!]=\rho\curvearrowright v\}\\
  [\![i:=[e]]\!]&=\{\rho([l]=v)(i:=v)\mid [\![e]\!]=\rho\curvearrowright l\}\\
  [\![[e_1]:=e_2]\!]&=\{\rho_1\rho_2([l]:=v)\mid [\![e_1]\!]=\rho_1\curvearrowright l\;and\;[\![e_2]\!]=\rho_2\curvearrowright v\}\\
  [\![i:=\mathbf{cons}\;E]\!]&=\{\rho(\mathbf{alloc}(l,V))(i:=l)\mid \rho\curvearrowright V\} \\
  [\![\mathbf{dispose}\;e]\!]&=\{\rho(\mathbf{dispose}\;l)\mid \rho\curvearrowright l\}\\
  [\![k_1;k_2]\!]&=[\![k_1]\!][\![k_2]\!]\\
  [\![\mathbf{if}\;b\;\mathbf{then}\;k_1\;\mathbf{else}\;k_2]\!]&=[\![b]\!]_{true}[\![k_1]\!]\cup[\![b]\!]_{false}[\![k_2]\!]\\
  [\![\mathbf{while}\;b\;\mathbf{do}\;k]\!]&=[\![\mathbf{if}\;b\;\mathbf{then}\;(k;\mathbf{while}\;b\;\mathbf{do}\;k)\;\mathbf{else}\;\mathbf{skip}]\!]\\
  [\![\mathbf{resource}\;r\;\mathbf{in}\;k]\!]&=\{\alpha\backslash r\mid \alpha\in [\![k]\!]_{[r:1]}\}\\
  [\![\mathbf{with}\;r\;\mathbf{when}\;b\;\mathbf{do}\;k]\!]&=wait^*enter\cup wait^\omega\\
  where&\;wait=\{\mathbf{try}\;r\}\cup\{(\mathbf{acq}\;r)\rho(\mathbf{rel}\;r)\mid \rho\in [\![b]\!]_{false}\}\\
  and  &\;enter=\{(\mathbf{acq}\;r)\rho\alpha(\mathbf{rel}\;r)\mid [\![b]\!]_{true}\; and\; \alpha\in[\![k]\!]\}\\
  [\![k_1\|k_2]\!]&=\cup\{\alpha_{1\{A_1,B_1\}}\|_{\{A_2,B_2\}}\alpha_2\mid \alpha_1\in [\![k_1]\!]\;and\; \alpha_2\in [\![k_2]\!]\\
  and & \;\mathbf{writes}(\alpha_1)\cap A_2 =\emptyset\;and\;\mathbf{writes}(\alpha_1)\cap B_2 =\emptyset\\
  and & \;\mathbf{writes}(\alpha_2)\cap A_1 =\emptyset\;and\;\mathbf{writes}(\alpha_2)\cap B_1 =\emptyset\}
\end{align*}
where
$$\lambda\alpha_{1\{A_1,B_1\}}\|_{\{A_2,B_2\}}\mu\alpha_2=\{\lambda\alpha_3\mid \alpha_3\in \alpha_{1\{A_1,B_1\}}\|_{\{A_2,B_2\}}\mu\alpha_2\}$$
$$\cup\{\mu\alpha_3\mid \alpha_3\in \lambda\alpha_{1\{A_1,B_1\}}\|_{\{A_2,B_2\}}\alpha_2\}$$
$$\cup\{\mathbf{abort}\mid \mathbf{writes}(\lambda)\cap \mathbf{free}(\mu)\neq\emptyset\;
or\;\mathbf{writes}(\mu)\cap \mathbf{free}(\lambda)\neq\emptyset\}$$
\subsection*{Validity}
\begin{defi}
The well-formed assertion
$\Gamma \vdash_{A,B} \exists X((o=<\!\!k\!\!>_k<\!\!s\!\!>_s<\!\!h_1,h_2,H\!\!>_h<\!\!N_1,N_2\!\!>_r)\land p)\Downarrow \exists X'((o=<\!\!\cdot\!\!>_k<\!\!s'\!\!>_s<\!\!h_1',h_2',H'\!\!>_h<\!\!N_1',N_2'\!\!>_r)\land q)$ is valid iff  $<\!\!s_c\!\!>_s<\!\!h_{1c},h_{2c},H_c\!\!>_h<\!\!N_{1c},N_{2c}\!\!>_r\in \Sigma_\Gamma$ with $\mathbf{owned}(\Gamma)\cup A \subseteq \mathbf{dom}(s_c)$, if there exits a map $\tau:\mathit{Var}\to Int$ such that $(<\!\!k\!\!>_k<\!\!s_c\!\!>_s<\!\!h_{1c},h_{2c},H_c\!\!>_h<\!\!N_{1c},N_{2c}\!\!>_r,\tau)\models_B \exists X((o=<\!\!k\!\!>_k<\!\!s\!\!>_s<\!\!h_1,h_2,H\!\!>_h<\!\!N_1,N_2\!\!>_r)\land p)$, then for all traces $\alpha\in [\![k]\!]$,
\begin{itemize}
  \item $\neg(<\!\!\alpha\!\!>_k<\!\!s_c\!\!>_s<\!\!h_{1c},h_{2c},H_c\!\!>_h<\!\!N_{1c},N_{2c}\!\!>_r\xLongrightarrow[\Gamma, A, B]{}^*\mathbf{abort})$
  \item if $<\!\!\alpha\!\!>_k<\!\!s_c\!\!>_s<\!\!h_{1c},h_{2c},H_c\!\!>_h<\!\!N_{1c},N_{2c}\!\!>_r\xLongrightarrow[\Gamma, A, B]{}^*<\!\!\cdot\!\!>_k<\!\!s_c'\!\!>_s<\!\!h_{1c}',h_{2c}',H_c'\!\!>_h<\!\!N_{1c}',N_{2c}'\!\!>_r$, then $(<\!\!\cdot\!\!>_k<\!\!s_c'\!\!>_s<\!\!h_{1c}',h_{2c}',H_c'\!\!>_h<\!\!N_{1c}',N_{2c}'\!\!>_r,\tau)\models_B \exists X'((o=<\!\!\cdot\!\!>_k<\!\!s'\!\!>_s<\!\!h_1',h_2',H'\!\!>_h<\!\!N_1',N_2'\!\!>_r)\land q)$
\end{itemize}
\end{defi}
\section{Soundness}
The proof of soundness of the inference rules is a rule-by-rule case analysis. We start with some important properties of the semantics.
\begin{lem}
Let $h\bot h_3$, $h_1\bot h_2$, $h=h_1\cdot h_2$, $N \cap N_3=\emptyset$, $N=N_1 \cup N_2$, $N_1 \cap N_2=\emptyset$, $\lambda$ is an action.
\begin{enumerate}
\item If $<\!\!\lambda\!\!>_k<\!\!s\!\!>_s<\!\!h,h_3,H\!\!>_h<\!\!N,N_3\!\!>_r\xLongrightarrow[\Gamma, A_1\cup A_2, B_1\cup B_2]{}^*\mathbf{abort}$, then $<\!\!\lambda\!\!>_k<\!\!s\!\!>_s<\!\!h_1,h_3\cdot h_2,H\!\!>_h<\!\!N_1,N_3\cup N_2\!\!>_r\xLongrightarrow[\Gamma, A_1,B_1]{}^*\mathbf{abort}$
\item If $<\!\!\lambda\!\!>_k<\!\!s\!\!>_s<\!\!h,h_3,H\!\!>_h<\!\!N,N_3\!\!>_r\xLongrightarrow[\Gamma, A_1\cup A_2, B_1\cup B_2]{}^*<\!\!\cdot\!\!>_k<\!\!s'\!\!>_s<\!\!h',h_3',H'\!\!>_h<\!\!N',N_3'\!\!>_r$, then $<\!\!\lambda\!\!>_k<\!\!s\!\!>_s<\!\!h_1,h_3\cdot h_2,H\!\!>_h<\!\!N_1,N_3\cup N_2\!\!>_r\xLongrightarrow[\Gamma, A_1, B_1]{}^*\mathbf{abort}$ or there are $h_1'$, $N_1'$ such that $h'=h_1'\cdot h_2$ and $N_1'\cap N_2=\emptyset$ and $N'=N_1'\cup N_2$ and $<\!\!\lambda\!\!>_k<\!\!s\!\!>_s<\!\!h_1,h_3\cdot h_2,H\!\!>_h<\!\!N_1,N_3\cup N_2\!\!>_r\xLongrightarrow[\Gamma, A_1, B_1]{}^*<\!\!\cdot\!\!>_k<\!\!s'\!\!>_s<\!\!h_1',h_3'\cdot h_2,H'\!\!>_h<\!\!N_1',N_3'\cup N_2\!\!>_r$.
\end{enumerate}
\end{lem}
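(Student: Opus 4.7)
The statement concerns a single action $\lambda$ under $\xLongrightarrow$, which by definition is a prefix of environment moves $\rightsquigarrow$, then one primitive step $\xrightarrow$, then a suffix of environment moves. I would reduce the lemma to two transfer principles: (i) each environment move legal in the big decomposition $(h, h_3; N, N_3)$ is also legal in the small one $(h_1, h_3 \cdot h_2; N_1, N_3 \cup N_2)$; and (ii) the primitive $\xrightarrow$ step on $\lambda$ transfers by case analysis on the shape of $\lambda$. Since a one-action configuration can fire $\xLongrightarrow$ at most once, the starred variant collapses to a single $\xLongrightarrow$.

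\textbf{Environment moves.} An environment move in the big configuration executes some auxiliary action $\mu$ on the swapped view, touching only $h_3, H, N_3$ and satisfying $\mathbf{writes}(\mu) \cap (A_1 \cup A_2) = \emptyset$ and $\mathbf{writes}(\mu) \cap (B_1 \cup B_2) = \emptyset$. The same $\mu$ is a valid environment move in the small configuration with rely set $A_1$ and key set $B_1$, because the write constraints are weakened ($A_1 \subseteq A_1 \cup A_2$ and $B_1 \subseteq B_1 \cup B_2$) and the $h_2, N_2$ portions, lying inside the small environment, are left intact. This observation handles both the prefix and suffix of $\rightsquigarrow$-steps.

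\textbf{Primitive step.} I would enumerate the semantic rules for $\lambda$. For most cases ($\delta$, identifier read/write, heap read/write, $\mathbf{try}\,r$, $\mathbf{dispose}$, $\mathbf{alloc}$, $\mathbf{acq}\,r$) the analysis is routine: the ownership/domain conditions in the small setting are sharper (e.g.\ $\mathbf{dom}(h_1) \subseteq \mathbf{dom}(h)$ and $\Gamma \setminus N \subseteq \Gamma \setminus N_1$), so big-abort implies small-abort a fortiori, while big-success either yields an identical small-success with the update localized to $h_1$ (keeping $h_2, N_2$ fixed) or a small-abort when the accessed resource lives in the $h_2/N_2$ portion. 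In particular, $\mathbf{alloc}$ works because freshness against $h_1 \cdot h_2 \cdot h_3 \cdot H$ coincides with freshness against $h_1 \cdot (h_3 \cdot h_2) \cdot H$, and $\mathbf{acq}\,r$ works because $r \notin N \cup N_3$ and the source pool $H$ are shared between the two decompositions; taking the same $h_r$ yields $h_1' = h_1 \cdot h_r$ and $N_1' = N_1 \cup \{r\}$, with $h_2, N_2$ preserved.

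\textbf{Main obstacle.} The delicate case is $\mathbf{rel}\,r$. Big picks an $h_r \subseteq h$ satisfying $\mathbf{inv}(\Gamma \upharpoonright r)$ and removes $h_r$ from $h$ and $r$ from $N$, but to match the required shape $h' = h_1' \cdot h_2$ and $N' = N_1' \cup N_2$ for small, I must show that either $h_r \subseteq h_1$ and $r \in N_1$ (so small mirrors big) or else small aborts. The key lever is precision of $\mathbf{inv}(\Gamma \upharpoonright r)$, which holds because $\Gamma$ is well-formed: at most one subheap of any given heap satisfies the invariant. Applied to $h$, precision makes big's $h_r$ unique; if $h_r$ intersects $h_2$, then by precision no subheap of $h_1$ satisfies $\mathbf{inv}(\Gamma \upharpoonright r)$, so the $\mathbf{rel}$ abort rule fires in the small configuration. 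The residual case $r \in N_2$ would make the target decomposition $N' = N_1' \cup N_2$ arithmetically incompatible with big's removal of $r$, and I would dispatch it by the same precision argument, showing no $h_r \subseteq h_1$ can witness $\mathbf{inv}(\Gamma \upharpoonright r)$ so that the abort precondition is triggered. Nailing down this bookkeeping is where I expect the main effort to go, but it is careful use of precision rather than genuinely new mathematics.
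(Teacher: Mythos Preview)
Your plan is essentially the paper's own argument: unfold $\xLongrightarrow$ into a prefix of environment moves, one primitive $\xrightarrow$ step, and a suffix of environment moves; transfer the environment moves by noting that the write constraints only weaken when passing from $A_1\cup A_2,B_1\cup B_2$ to $A_1,B_1$; and then do a case analysis on $\lambda$, with $\mathbf{acq}\,r$ and $\mathbf{rel}\,r$ as the only nontrivial cases. For $\mathbf{rel}\,r$ the paper splits on whether big's chosen $h_r$ lies inside $h_1$: if so, small succeeds with the same $h_r$; if not, small aborts. You are right that this ``if not, abort'' step silently uses precision of $\mathbf{inv}(\Gamma\upharpoonright r)$ --- that is exactly what forces the small abort precondition $\forall h_r'\subseteq h_1.\ \lnot\mathbf{inv}$ to hold --- and the paper does not spell this out.

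One correction on your handling of the residual case $r\in N_2$. Precision gives uniqueness of a witnessing subheap, not nonexistence. If big's unique $h_r$ happens to lie entirely inside $h_1$, then that $h_r$ \emph{is} a subheap of $h_1$ satisfying the invariant, so the small abort rule does not fire; yet the small success rule cannot fire either, since it requires $r$ to belong to the process-side resource set $N_1$. So precision alone does not dispatch this case. The paper's proof simply does not address $r\in N_2$ (it tacitly works under $r\in N_1$), and as stated the lemma has the same lacuna you have spotted. In the intended uses (Theorem~4.3 and the PARALLEL case of soundness) the split $N=N_1\cup N_2$ always places a resource being released in the releasing thread's own $N_1$, so this does not threaten the paper's main results; but your write-up should flag it as an extra side condition rather than claim precision closes it.
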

The proof is shown in Appendix A (Lemma 4.1).
\begin{lem}
$h\bot h_3$, $h_1\bot h_2$, $h=h_1\cdot h_2$, $N \cap N_3=\emptyset$, $N=N_1 \cup N_2$, $N_1 \cap N_2=\emptyset$, $\alpha$ is an trace.
\begin{enumerate}
\item If $<\!\!\alpha\!\!>_k<\!\!s\!\!>_s<\!\!h,h_3,H\!\!>_h<\!\!N,N_3\!\!>_r\xLongrightarrow[\Gamma, A_1\cup A_2, B_1\cup B_2]{}^*\mathbf{abort}$, then $<\!\!\alpha\!\!>_k<\!\!s\!\!>_s<\!\!h_1,h_3\cdot h_2,H\!\!>_h<\!\!N_1,N_3\cup N_2\!\!>_r\xLongrightarrow[\Gamma, A_1,B_1]{}^*\mathbf{abort}$
\item If $<\!\!\alpha\!\!>_k<\!\!s\!\!>_s<\!\!h,h_3,H\!\!>_h<\!\!N,N_3\!\!>_r\xLongrightarrow[\Gamma, A_1\cup A_2, B_1\cup B_2]{}^*<\!\!\cdot\!\!>_k<\!\!s'\!\!>_s<\!\!h',h_3',H'\!\!>_h<\!\!N',N_3'\!\!>_r$, then $<\!\!\alpha\!\!>_k<\!\!s\!\!>_s<\!\!h_1,h_3\cdot h_2,H\!\!>_h<\!\!N_1,N_3\cup N_2\!\!>_r\xLongrightarrow[\Gamma, A_1, B_1]{}^*\mathbf{abort}$ or there are $h_1'$, $N_1'$ such that $h'=h_1'\cdot h_2$ and $N_1'\cap N_2=\emptyset$ and $N'=N_1'\cup N_2$ and $<\!\!\alpha\!\!>_k<\!\!s\!\!>_s<\!\!h_1,h_3\cdot h_2,H\!\!>_h<\!\!N_1,N_3\cup N_2\!\!>_r\xLongrightarrow[\Gamma, A_1, B_1]{}^*<\!\!\cdot\!\!>_k<\!\!s'\!\!>_s<\!\!h_1',h_3'\cdot h_2,H'\!\!>_h<\!\!N_1',N_3'\cup N_2\!\!>_r$.
\end{enumerate}
\end{lem}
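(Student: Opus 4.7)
The plan is to prove both clauses simultaneously by induction on the length of the trace $\alpha$, using the trace-extended semantic rules together with Lemma 4.1 as the one-action base case. For Part (1) in particular, even when $\alpha$ is infinite, an abort is reached in finitely many $\xLongrightarrow$-steps, so it suffices to induct on the length of the finite prefix that is actually consumed before $\mathbf{abort}$ is produced.

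The base case $\alpha = \lambda$ is exactly Lemma 4.1 in both parts. For the inductive step, write $\alpha = \lambda\beta$ and unfold the trace rule
$\langle \lambda\beta\rangle_k \cdots \xLongrightarrow[\Gamma,A,B]{} \langle\beta\rangle_k \cdots$
so that any full execution of $\alpha$ decomposes as a single $\xLongrightarrow$-step on $\lambda$ followed by an execution of $\beta$. For clause (1), suppose the combined configuration $\langle\lambda\beta\rangle_k\langle s\rangle_s\langle h,h_3,H\rangle_h\langle N,N_3\rangle_r$ aborts under $(A_1\cup A_2, B_1\cup B_2)$. Either the abort occurs already during the $\lambda$-step, in which case Lemma 4.1(1) applied directly gives abort of the split configuration $\langle s\rangle_s\langle h_1,h_3\cdot h_2,H\rangle_h\langle N_1,N_3\cup N_2\rangle_r$ on $\lambda$, and hence on $\lambda\beta$; or the $\lambda$-step succeeds to some intermediate configuration from which $\beta$ subsequently aborts. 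In the latter case, apply Lemma 4.1(2) to the successful $\lambda$-step: if the split version already aborts on $\lambda$ we are done, otherwise we obtain a matching split $h' = h_1'\cdot h_2$, $N' = N_1'\cup N_2$ of the intermediate configuration. Apply the inductive hypothesis for clause (1) to $\beta$ from this split intermediate state to conclude.

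For clause (2), assume the combined configuration runs $\alpha = \lambda\beta$ all the way to a terminal $\langle\cdot\rangle_k$. Decompose into the $\lambda$-step and the $\beta$-tail. Apply Lemma 4.1(2) to the $\lambda$-step: either the split aborts on $\lambda$ (in which case the conclusion holds via the abort disjunct for $\alpha$), or we get an intermediate split $h=h_1'\cdot h_2$, $N_1'\cap N_2=\emptyset$, $N'=N_1'\cup N_2$. Feed this intermediate split into the inductive hypothesis for clause (2) applied to $\beta$; the disjunction in the IH conclusion propagates directly to the disjunction we need for $\alpha$, and composing the two $\xLongrightarrow{}^*$ segments gives the required derivation. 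The split of the final configuration is precisely what the IH delivers.

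The main bookkeeping obstacle is that a single $\xLongrightarrow$-step already bundles three phases — environment moves $\rightsquigarrow^*$, the process transition, and further environment moves $\rightsquigarrow^*$ — each of which must interact correctly with the re-allocation of $h_2,N_2$ from process-owned to environment-owned. Lemma 4.1 handles this phase-by-phase bookkeeping for a single action, so the inductive argument for Lemma 4.2 only has to concatenate such witnesses; however, one must be careful that the intermediate split produced by Lemma 4.1(2) is legitimate input for the IH, i.e.\ that $h_1'\bot (h_3'\cdot h_2)$, $N_1'\cap(N_3'\cup N_2)=\emptyset$, and the local-state invariants are preserved, which follows from the disjointness and invariant conditions already guaranteed by Lemma 4.1 together with the hypotheses $h\bot h_3$, $N\cap N_3=\emptyset$ of the current lemma.
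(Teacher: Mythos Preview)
Your proposal is correct and matches the paper's approach exactly: the paper states only that ``Lemma 4.2 is a generalization of Lemma 4.1, the proof of Lemma 4.2 is an obvious induction proof,'' and your decomposition $\alpha=\lambda\beta$ with Lemma 4.1 handling the head action and the IH handling the tail is precisely that obvious induction. Your attention to the finite-prefix argument for infinite traces and to the disjointness bookkeeping between steps is more careful than anything the paper spells out, but does not deviate from its intended route.
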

Lemma 4.2 is a generalization of Lemma 4.1, the proof of Lemma 4.2 is an obvious induction proof.
\begin{thm}
Let $\mathbf{mod}(\alpha_1)\cap A_2=\mathbf{mod}(\alpha_2)\cap A_1=\emptyset$, $\mathbf{free}(\alpha_1)\subseteq \mathbf{owned}(\Gamma)\cup A_1$, $\mathbf{free}(\alpha_2)\subseteq \mathbf{owned}(\Gamma)\cup A_2$, $N_1 \cap N_2=\emptyset$, $N=N_1 \cup N_2$, $N \cap N_3=\emptyset$, $h=h_1\cdot h_2$, $h\bot h_3$, and $\alpha\in \alpha_{1\{A_1,B_1\}}\|_{\{A_2,B_2\}}\alpha_2$
\begin{enumerate}
  \item If $<\!\!\alpha\!\!>_k<\!\!s\!\!>_s<\!\!h,h_3,H\!\!>_h<\!\!N,N_3\!\!>_r\xLongrightarrow[\Gamma, A_1\cup A_2, B_1\cup B_2]{}^*\mathbf{abort}$, then $<\!\!\alpha_1\!\!>_k<\!\!s\!\!>_s<\!\!h_1,h_3\cdot h_2,H\!\!>_h<\!\!N_1,N_3\cup N_2\!\!>_r\xLongrightarrow[\Gamma, A_1,B_1]{}^*\mathbf{abort}$ or $<\!\!\alpha_2\!\!>_k<\!\!s\!\!>_s<\!\!h_2,h_3\cdot h_1,H\!\!>_h<\!\!N_2,N_3\cup N_1\!\!>_r\xLongrightarrow[\Gamma, A_2, B_2]{}^*\mathbf{abort}$
  \item If $<\!\!\alpha\!\!>_k<\!\!s\!\!>_s<\!\!h,h_3,H\!\!>_h<\!\!N,N_3\!\!>_r\xLongrightarrow[\Gamma, A_1\cup A_2, B_1\cup B_2]{}^*<\!\!\cdot\!\!>_k<\!\!s'\!\!>_s<\!\!h',h_3',H'\!\!>_h<\!\!N',N_3'\!\!>_r$, then $<\!\!\alpha_1\!\!>_k<\!\!s\!\!>_s<\!\!h_1,h_3\cdot h_2,H\!\!>_h<\!\!N_1,N_3\cup N_2\!\!>_r\xLongrightarrow[\Gamma, A_1, B_1]{}^*\mathbf{abort}$, $<\!\!\alpha_2\!\!>_k<\!\!s\!\!>_s<\!\!h_2,h_3\cdot h_1,H\!\!>_h<\!\!N_2,N_3\cup N_1\!\!>_r\xLongrightarrow[\Gamma, A_2, B_2]{}^*\mathbf{abort}$ or there are $h_1'$, $h_2'$, $N_1'$, $N_2'$ such that $N_1'\cap N_2'=\emptyset$, $N'=N_1'\cup N_2'$, $h'=h_1'\cdot h_2'$ and
      $<\!\!\alpha_1\!\!>_k<\!\!s\!\!>_s<\!\!h_1,h_3\cdot h_2,H\!\!>_h<\!\!N_1,N_3\cup N_2\!\!>_r\xLongrightarrow[\Gamma, A_1, B_1]{}^*<\!\!\cdot\!\!>_k<\!\!s'\!\!>_s<\!\!h_1',h_3'\cdot h_2',H'\!\!>_h<\!\!N_1',N_3'\cup N_2'\!\!>_r$, $<\!\!\alpha_2\!\!>_k<\!\!s\!\!>_s<\!\!h_2,h_3\cdot h_1,H\!\!>_h<\!\!N_2,N_3\cup N_1\!\!>_r\xLongrightarrow[\Gamma, A_2, B_2]{}^*<\!\!\cdot\!\!>_k<\!\!s'\!\!>_s<\!\!h_2',h_3'\cdot h_1',H'\!\!>_h<\!\!N_2',N_3'\cup N_1'\!\!>_r$
\end{enumerate}
\end{thm}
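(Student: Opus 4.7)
The plan is to prove both parts of the theorem simultaneously by induction on the length of the interleaving $\alpha$, using Lemma 4.2 to push each elementary step onto one of the two component processes. The recursive structure of the operator $\alpha_{1\{A_1,B_1\}}\|_{\{A_2,B_2\}}\alpha_2$ -- at each step either peeling the head action off $\alpha_1$, peeling it off $\alpha_2$, or collapsing to $\mathbf{abort}$ when the heads race -- guides the induction. The base cases (one of $\alpha_1,\alpha_2$ empty, or $\alpha$ of length one) reduce immediately to Lemma 4.2 applied to the nontrivial side.

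For the inductive step I write $\alpha=\lambda\alpha'$ and split into three sub-cases by the origin of $\lambda$. If $\lambda$ comes from $\alpha_1$, then $\alpha_1=\lambda\alpha_1'$ and $\alpha'\in\alpha_1'\|\alpha_2$. I factor the leading $\xLongrightarrow$-step through Lemma 4.1(2) (Lemma 4.1(1) in the abort branch) to obtain either an abort of $\alpha_1$ on the left-local configuration $<\!\!s\!\!>_s<\!\!h_1,h_3\cdot h_2,H\!\!>_h<\!\!N_1,N_3\cup N_2\!\!>_r$, or an updated configuration whose heap splits as $h_1''\cdot h_2$ and whose resources split as $N_1''\cup N_2$. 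Crucially, from process 2's perspective the execution of $\lambda$ together with any flanking environment moves is itself a legal environment move, because the side-condition built into the definition of $\|$ guarantees $\mathbf{writes}(\lambda)\cap A_2=\mathbf{writes}(\lambda)\cap B_2=\emptyset$, while the hypotheses $\mathbf{mod}(\alpha_1)\cap A_2=\emptyset$, $\mathbf{free}(\alpha_1)\subseteq\mathbf{owned}(\Gamma)\cup A_1$, and $N_1\cap N_2=\emptyset$ preserve the local invariants of $\alpha_2$. I then invoke the induction hypothesis on $\alpha'$ with the updated splits. The case where $\lambda$ comes from $\alpha_2$ is symmetric.

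The remaining sub-case is $\lambda=\mathbf{abort}$, forced by a race between the heads $\lambda_1$ of $\alpha_1$ and $\lambda_2$ of $\alpha_2$, say $\mathbf{writes}(\lambda_1)\cap\mathbf{free}(\lambda_2)\neq\emptyset$. I argue this makes one of the separated executions abort on its own. If the conflict is a heap cell $l$, then a successful $\lambda_1$ on process 1 requires $l\in h_1$, hence by disjointness $l\notin h_2$, and so $\lambda_2$ (which reads or writes $l$) aborts on process 2's local heap by the relevant semantic rule. If the conflict is an identifier $i$, then $i\in\mathbf{mod}(\lambda_1)$ yields $i\notin A_2$ by hypothesis, and the ownership precondition for a successful write of $i$ on process 1 forces $i\notin\mathbf{owned}(\Gamma\setminus N_1)$; together with $N_1\cap N_2=\emptyset$ this gives $i\notin\mathbf{owned}(\Gamma\upharpoonright N_2)\cup A_2$, so $\lambda_2$'s access to $i$ aborts.

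I expect the main obstacle to be the bookkeeping around the $\xLongrightarrow$ relation, which bundles an action with arbitrary environment moves on both sides. When peeling off an action belonging to $\alpha_1$, some environment moves of the combined run actually correspond to genuine actions of $\alpha_2$ and must be re-classified rather than treated as external, and vice versa. This is aggravated by the $\mathbf{acq}\,r$ and $\mathbf{rel}\,r$ actions, which transfer sub-heaps between the owned portion ($h_1$ or $h_2$) and the shared invariant heap $H$: the invariant $h=h_1\cdot h_2$ and the disjointness with $h_3$ and $H$ must be re-established after every such action, using the well-formedness of $\Gamma$ and preciseness of $\mathbf{inv}(\Gamma\upharpoonright r)$ to pin down a unique transferred piece $h_r$.
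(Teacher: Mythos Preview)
Your proposal is correct and follows essentially the same route as the paper's proof: induction on the trace lengths, Lemma~4.2 for the empty-trace base case, Lemma~4.1 to peel off the leading action onto process~1 (or symmetrically process~2), the induction hypothesis on the residual interleaving, and the observation that the peeled action is a legitimate $\rightsquigarrow_{\Gamma,A_2,B_2}$ environment move from process~2's viewpoint. Your treatment of the race sub-case (distinguishing heap-cell conflicts from identifier conflicts and invoking the disjointness hypotheses to force a local abort) is in fact more explicit than the paper's, which simply asserts that the race forces one side to abort; and your closing paragraph correctly anticipates that the $\mathbf{acq}\,r$/$\mathbf{rel}\,r$ cases are where the heap-splitting bookkeeping is delicate, which is exactly where the paper spends its effort in the proof of Lemma~4.1.
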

The proof is shown in Appendix A (Theorem 4.3).
\begin{thm}
\begin{enumerate}
  \item If $<\!\!s\!\!>_s<\!\!h_1\cdot h,h_2,H\!\!>_h<\!\!N_1,N_2\!\!>_r\in \Sigma_\Gamma$ and $<\!\!s\!\!>_s<\!\!h\!\!>_h<\!\!\{\}\!\!>_r\models R$ and $free(R)\subseteq X$ and $r\notin \mathbf{dom}(\Gamma)$, then $<\!\![s\mid r\mapsto 1]\!\!>_s<\!\!h_1,h_2,H\cdot h\!\!>_h<\!\!N_1,N_2\!\!>_r\in \Sigma_{\Gamma,r(X):R}$;
  \item If $r\notin \mathbf{dom}(\Gamma)$ and $\beta\in [\![k]\!]_{r:1}$ and $<\!\!\beta\backslash r\!\!>_k<\!\!s\!\!>_s<\!\!h_1\cdot h,h_2,H\!\!>_h<\!\!N_1,N_2\!\!>_r\xLongrightarrow[\Gamma, A, B]{}^*\mathbf{abort}$, then $<\!\!\beta\!\!>_k<\!\![s\mid r\mapsto 1]\!\!>_s<\!\!h_1,h_2,H\cdot h\!\!>_h<\!\!N_1,N_2\!\!>_r\xLongrightarrow[(\Gamma,r(X):R), A, B]{}^*\mathbf{abort}$;
  \item If $r\notin \mathbf{dom}(\Gamma)$ and $\beta\in [\![k]\!]_{r:1}$ and $<\!\!\beta\backslash r\!\!>_k<\!\!s\!\!>_s<\!\!h_1\cdot h,h_2,H\!\!>_h<\!\!N_1,N_2\!\!>_r\xLongrightarrow[\Gamma, A, B]{}^*<\!\!\cdot\!\!>_k<\!\!s'\!\!>_s<\!\!h_1'\cdot h',h_2',H'\!\!>_h<\!\!N_1',N_2'\!\!>_r$, then either  $<\!\!\beta\!\!>_k<\!\![s\mid r\mapsto 1]\!\!>_s<\!\!h_1,h_2,H\cdot h\!\!>_h<\!\!N_1,N_2\!\!>_r\xLongrightarrow[(\Gamma,r(X):R), A, B]{}^*\mathbf{abort}$ or $<\!\!\beta\!\!>_k<\!\![s\mid r\mapsto 1]\!\!>_s<\!\!h_1,h_2,H\cdot h\!\!>_h<\!\!N_1,N_2\!\!>_r\xLongrightarrow[(\Gamma,r(X):R), A, B]{}^*<\!\!\cdot\!\!>_k<\!\![s'\mid r\mapsto 1]\!\!>_s<\!\!h_1',h_2',H'\cdot h'\!\!>_h<\!\!N_1',N_2'\!\!>_r$.
\end{enumerate}
\end{thm}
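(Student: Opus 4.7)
The three statements are proved in turn; (2) and (3) share the same inductive skeleton, so I would prove them together by a single induction on the number of action steps in the trace $\beta$, using Lemma~4.2 to handle the interaction between process steps, resource moves, and environment moves in a uniform way.

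\textbf{Part (1).} This is a direct unfolding of Definition~4.1. Since $<\!\!s\!\!>_s<\!\!h_1\cdot h,h_2,H\!\!>_h<\!\!N_1,N_2\!\!>_r\in\Sigma_\Gamma$, we already have $(h_1\cdot h)\bot h_2$, $(h_1\cdot h)\bot H$ and $h_2\bot H$, from which $h_1\bot h_2$, $h_1\bot(H\cdot h)$ and $h_2\bot(H\cdot h)$ follow. Because $r\notin\mathbf{dom}(\Gamma)$, no resource name in $N_1\cup N_2$ equals $r$, so setting $s(r)=1$ is consistent with the second clause of ``local state'' applied to $(\Gamma,r(X):R)$. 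Finally, since $(\Gamma,r(X):R)\setminus(N_1\cup N_2)=(\Gamma\setminus(N_1\cup N_2)),r(X):R$, we compute $\mathbf{inv}((\Gamma,r(X):R)\setminus(N_1\cup N_2))=\mathbf{inv}(\Gamma\setminus(N_1\cup N_2))*R$; the hypotheses furnish $H$ as a witness for the first conjunct and $h$ for $R$, with the required disjointness, so the composite invariant holds on $H\cdot h$.

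\textbf{Parts (2) and (3).} I induct on the length of $\beta$; the empty-trace base case is trivial. For the inductive step, write $\beta=\lambda\beta'$ and case-split on $\lambda$. If $\lambda\notin\{\mathbf{try}\ r,\mathbf{acq}\ r,\mathbf{rel}\ r\}$, then $\lambda\backslash r=\lambda$, and the semantic step in $\Gamma$ (with the cell $h$ sitting inside the process-owned heap $h_1\cdot h$) is mirrored step-for-step in $(\Gamma,r(X):R)$ (with $h$ sitting inside $H\cdot h$), because $\lambda$ neither references $r$ nor reads/writes any address in $h$. The IH applied to $\beta'$ closes the case. If $\lambda=\mathbf{try}\ r$, in the stripped trace it is $\delta$; in $\beta$ it is also a pure no-op on state, so the IH again applies directly. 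The crucial case is $\lambda=\mathbf{acq}\ r$: in $\beta\backslash r$ it is $\delta$, while in $\beta$ the semantic rule picks some $h_r\subseteq H\cdot h$ with $<\!\!s\!\!>_s<\!\!h_r\!\!>_h<\!\!\{\}\!\!>_r\models R$ and transfers $h_r$ into the process heap. By precision of $R$ (part of the well-formedness of $\Gamma,r(X):R$) and Part~(1), $h_r$ must be exactly $h$. After the step, the extended configuration has $h$ in the process portion and $r\in N_1$, which agrees on the nose with the $\delta$-image in $\Gamma$ once the cells in $h$ are reassigned to the process; the IH on $\beta'$ then applies. The case $\lambda=\mathbf{rel}\ r$ is symmetric, again using precision to force the released sub-heap to equal $h$ and returning to the situation of the preceding paragraph.

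Environment-move steps attached to each action by the definition of $\xLongrightarrow[\Gamma,A,B]{}$ are handled uniformly: since $r\notin\mathbf{dom}(\Gamma)$, any environment thread under $\Gamma$ cannot perform a resource action on $r$, so its moves in the two contexts are step-for-step identical once $h$ is placed in the appropriate portion of the configuration; Lemma~4.2 packages this mirroring and in particular routes any abort on the extended side to the abort clause of (2) or to the abort-alternative of (3).

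\textbf{Main obstacle.} The delicate point is the $\mathbf{acq}\ r$/$\mathbf{rel}\ r$ case. Precision of $R$ gives uniqueness of the separating sub-heap only pointwise, but what the induction really needs is a loop-invariant along $\beta$: whenever $r\notin N_1$ the designated cells $h$ are in the resource pool (and not absorbed into $h_2$ by some environment step), and whenever $r\in N_1$ those same cells live in the process-owned heap. Strengthening the induction hypothesis to carry this invariant, and verifying it is preserved by every action \emph{and} by every environment move, is where the technical weight of the argument lies.
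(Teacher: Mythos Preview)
The paper does not prove Theorem~4.4 at all; it simply records ``The Theorem 4.4 is very intuitive, we omit the proof.'' So there is nothing to compare against, and your proposal already goes substantially beyond what the paper offers. The architecture you outline---direct verification of Part~(1) from Definition~4.1, and a simultaneous induction on the length of $\beta$ for Parts~(2) and~(3), case-splitting on the leading action and using precision of $R$ to pin down the sub-heap transferred by $\mathbf{acq}\;r$ and $\mathbf{rel}\;r$---is the natural line and is essentially sound.

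Two places deserve tightening. First, your appeals to Lemma~4.2 are loose: that lemma moves a heap fragment between the process-owned slot and the \emph{environment} slot, whereas here you need to move $h$ between the process-owned slot and the \emph{resource-pool} slot $H$; the step-for-step mirroring must be argued directly from the individual action rules rather than by citing Lemma~4.2. Second, your treatment of $\mathbf{try}\;r$ glosses over the fact that its rule only fires when $r\in N_1\cup N_2$; to know that $\mathbf{try}\;r$ (as opposed to $\mathbf{acq}\;r$) is actually enabled at that point in $\beta$ you already need the strengthened invariant you identify at the end, tracking along $\beta$ whether $r$ is currently held and where $h$ resides. Your ``main obstacle'' paragraph correctly isolates this as the real content of the proof; once that invariant is stated and checked against every action rule and every environment move, the argument goes through.
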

The Theorem 4.4 is very intuitive, we omit the proof.
\begin{thm}(Soundness)\\
Every provable assertion $\Gamma \vdash_{A,B} \exists X((o=<\!\!k\!\!>_k<\!\!s\!\!>_s<\!\!h_1,h_2,H\!\!>_h<\!\!N_1,N_2\!\!>_r)\land p)\Downarrow \exists X'((o=<\!\!\cdot\!\!>_k<\!\!s'\!\!>_s<\!\!h_1',h_2',H'\!\!>_h<\!\!N_1',N_2'\!\!>_r)\land q)$ is valid.
\end{thm}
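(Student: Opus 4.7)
The plan is to prove soundness by induction on the derivation of the CML assertion, handling each inference rule as a separate case. For each rule, I assume the premise assertions are valid and show the conclusion assertion is valid, i.e., for every concrete configuration matching the precondition on the key set $B$, and for every trace $\alpha \in [\![k]\!]$, the trace does not abort under $\xLongrightarrow[\Gamma, A, B]{}^*$, and any terminating execution yields a configuration matching the postcondition. The semantics of commands (in particular the set-theoretic definitions of $[\![k_1;k_2]\!]$, $[\![k_1\|k_2]\!]$, $[\![\mathbf{resource}\;r\;\mathbf{in}\;k]\!]$ and $[\![\mathbf{with}\;r\;\mathbf{when}\;b\;\mathbf{do}\;k]\!]$) will be unfolded to pick an appropriate sub-trace for each premise.

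For the atomic rules (SKIP, ASSIGNMENT, LOOKUP, UPDATE, ALLOCATION, DISPOSAL), I unfold $[\![k]\!]$ into the form $\rho \lambda$ where $\rho$ is a pure evaluation trace and $\lambda$ is the effectful action, then appeal directly to the corresponding semantic rule for $\lambda$. The side condition $i \notin \mathbf{owned}(\Gamma)$ or $\mathbf{free}(e)\subseteq A$ rules out the abort clauses (e.g. writing to a critical identifier outside its region). The match of the postcondition on $B$ follows because only the relevant components of $LS$ change, and Lemma~1.1 gives the expected effect of substitution. For the structural rules CONSEQUENCE and FRAME, I use the definition of $\models_B$ directly; for FRAME, the side condition $\mathbf{mod}(c)\cap\mathbf{free}(R)=\emptyset$ ensures the frame $R$ is preserved under both program and environment steps. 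The AUXILIARY rule is handled by noting that erasing writes to $Y$ from every trace in $[\![k]\!]$ yields exactly $[\![k\setminus Y]\!]$, and by the hypothesis $Y\cap(\mathbf{free}(p)\cup\mathbf{free}(q))=\emptyset$ the post-state match on $B$ is unaffected.

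For the compositional rules, SEQUENCE uses $[\![k_1;k_2]\!]=[\![k_1]\!][\![k_2]\!]$ and chains the two inductive hypotheses through the intermediate configuration $LS_2$; CONDITIONAL and LOOP are similar after splitting on the boolean evaluation trace $[\![b]\!]_{true}$ or $[\![b]\!]_{false}$. The ENVIRONMENT MOVES rule relies on the fact that $\rightsquigarrow_{\Gamma,A,B}$ does not touch $h_1$, $N_1$, or identifiers in $A\cup B$, so the precondition is re-established after any environment step. For PARALLEL I invoke Theorem~4.3: any aborting execution of an interleaved trace $\alpha\in\alpha_{1\{A_1,B_1\}}\|_{\{A_2,B_2\}}\alpha_2$ must project to an aborting execution of one of $\alpha_1$ or $\alpha_2$ under its own rely/key sets, which contradicts the premises; any terminating execution decomposes into matching terminations of both sides, whose postconditions $q_1$ and $q_2$ combine via the separating conjunction because the final heaps and resource sets split exactly along the $h_1'\cdot h_2'$ and $N_1'\cup N_2'$ partition. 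The REGION and RESOURCE rules use Theorem~4.4, which relates executions of $\beta\in[\![k]\!]_{r:1}$ with the erased traces $\beta\backslash r$, together with the fact that at an $\mathbf{acq}\;r$ step the shared portion $h_r$ matching $R$ is transferred into the process's heap (and back at $\mathbf{rel}\;r$), so the resource invariant is consumed and produced exactly as the rule requires.

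The main obstacle is the PARALLEL case, which is where concurrency enters non-trivially. Specifically, Theorem~4.3's conclusion must be lined up carefully with the PARALLEL rule's matching requirement on the key set $B_1\cup B_2$: since $\mathbf{mod}(k_1)\cap B_2=\mathbf{mod}(k_2)\cap B_1=\emptyset$, the two sides only modify their own key-identifiers, but I must verify that the decomposition $h'=h_1'\cdot h_2'$ and $N'=N_1'\cup N_2'$ produced by Theorem~4.3 is exactly the one witnessing $q_1\ast q_2$ on the combined pattern, including agreement of $s'\!\upharpoonright_{B_1\cup B_2}$ with $\theta_\tau(s')$. A subtler point is the interaction between PARALLEL and ENVIRONMENT MOVES: one must ensure that the $\rightsquigarrow$ steps folded into $\xLongrightarrow[\Gamma,A_1\cup A_2,B_1\cup B_2]{}$ for the combined program can be simulated by admissible environment moves for each side, using the disjointness conditions on rely and key sets. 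Once this bookkeeping is in place, the remaining cases are routine.
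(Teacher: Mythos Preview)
Your proposal is correct and follows essentially the same route as the paper's proof: induction on the derivation, with the PARALLEL case handled via Theorem~4.3 and the RESOURCE case via Theorem~4.4, exactly as the paper does (the paper in fact only spells out these two cases and declares the rest ``fairly easy''). One small imprecision: Theorem~4.4 is specific to the RESOURCE rule, since it concerns the relationship between $\beta\in[\![k]\!]_{r:1}$ and the erased trace $\beta\backslash r$; the REGION rule is instead handled directly from the $\mathbf{acq}\;r$/$\mathbf{rel}\;r$ semantic clauses you mention, not from Theorem~4.4.
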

\begin{proof}
The proof is by induction on the length of the derivation, that every provable assertion is valid. For some of the rules this is fairly easy, we omit the proof. We only provide proof details for PARALLEL and RESOURCE.
\begin{itemize}
  \item PARALLEL; Let $LS=<\!\!s\!\!>_s<\!\!h_1\cdot h_2, h_3,H\!\!>_h<\!\!N_1\cup N_2, N_3\!\!>_r$ and $LS_1=<\!\!s\!\!>_s<\!\!h_1,h_2\cdot h_3,H\!\!>_h<\!\!N_1,N_2\cup N_3\!\!>_r$ and $LS_1'=<\!\!s'\!\!>_s<\!\!h_1',h_2'\cdot h_3',H'\!\!>_h<\!\!N_1',N_2'\cup N_3'\!\!>_r$ and $LS_2=<\!\!s\!\!>_s<\!\!h_2,h_1\cdot h_3,H\!\!>_h<\!\!N_2,N_1\cup N_3\!\!>_r$ and $LS_2'=<\!\!s'\!\!>_s<\!\!h_2',h_1'\cdot h_3',H'\!\!>_h<\!\!N_2',N_1'\cup N_3'\!\!>_r$ and
      $LS_f=<\!\!s'\!\!>_s<\!\!h_1'\cdot h_2',h_3',H'\!\!>_h<\!\!N_1'\cup N_2',N_3'\!\!>_r$.\\
      Suppose
      $\Gamma \vdash_{A_1,B_1} \exists X_1(o=<\!\!k_1\!\!>_k LS_1\land p_1)\Downarrow\exists X_2(o=<\!\!\cdot\!\!>_k LS_1'\land q_1)$ and $\Gamma \vdash_{A_2,B_2} \exists X_1(o=<\!\!k_2\!\!>_k LS_2\land p_2)\Downarrow\exists X_2(o=<\!\!\cdot\!\!>_k LS_2'\land q_2)$ are well-formed and valid, and $\mathbf{mod}(k_1) \cap A_2=\mathbf{mod}(k_1) \cap B_2=\mathbf{mod}(k_2) \cap A_1 =\mathbf{mod}(k_2) \cap B_1 = \emptyset$ and $N_1\cap N_2=\emptyset$. We next show that $\Gamma \vdash_{A_1\cup A_2,B_1\cup B_2} \exists X_1 (o=<\!\!k_1 \parallel k_2\!\!>_k LS\land (p_1*p_2))\Downarrow\exists X_2 (o=<\!\!\cdot\!\!>_k LS_f\land (q_1* q_2))$ is valid.\\
      Let $\tau:\mathit{Var}\to Int$ and $<\!\!s_c\!\!>_s<\!\!h_{c},h_{3c},H_c\!\!>_h<\!\!N_{c},N_{3c}\!\!>_r\in \Sigma_\Gamma$ and $(<\!\!k_1 \parallel k_2\!\!>_k<\!\!s_c\!\!>_s<\!\!h_{c},h_{3c},H_c\!\!>_h<\!\!N_{c},N_{3c}\!\!>_r,\tau)\models_{B_1\cup B_2} \exists X_1 (o=<\!\!k_1 \parallel k_2\!\!>_k LS\land (p_1*p_2))$, then we have
       \begin{enumerate}
         \item there exists some $\theta_\tau:\mathit{Var}\to Int$ with $\theta_\tau\!\!\upharpoonright_{\mathit{Var}/X_1}=\tau\!\!\upharpoonright_{\mathit{Var}/X_1}$;
         \item $s_c\upharpoonright_{B_1\cup B_2}=\theta_\tau(s)\upharpoonright_{B_1\cup B_2}$;
         \item $h_{c}=h_1\cdot h_2$ and $N_{c} = N_1\cup N_2$;
         \item $<\!\!s_c\!\!>_s<\!\!h_{c}\!\!>_h<\!\!\{\}\!\!>_r\models p_1\ast p_2$.
       \end{enumerate}
       Let $h_{1c}\bot h_{2c}$ and $h_c=h_{1c}\cdot h_{2c}$ and $<\!\!s_c\!\!>_s<\!\!h_{1c}\!\!>_h<\!\!\{\}\!\!>_r\models p_1$ and $<\!\!s_c\!\!>_s<\!\!h_{2c}\!\!>_h<\!\!\{\}\!\!>_r\models p_2$ and $N_{1c}=N_1$ and $N_{2c}=N_2$ and $N=N_1\cup N_2$.
       Since $\alpha_1\in [\![k_1]\!]$ and $\alpha_2\in [\![k_2]\!]$ and $\mathbf{mod}(k_1) \cap A_2=\mathbf{mod}(k_2) \cap A_1 =\emptyset$, then $\mathbf{mod}(\alpha_1)\cap A_2=\mathbf{mod}(\alpha_2)\cap A_1=\emptyset$.\\
       Let $\alpha\in \alpha_{1\{A_1,B_1\}}\|_{\{A_2,B_2\}}\alpha_2$,
       \begin{enumerate}
         \item if $<\!\!\alpha\!\!>_k<\!\!s_c\!\!>_s<\!\!h_{c},h_{3c},H_c\!\!>_h<\!\!N_{c},N_{3c}\!\!>_r\xLongrightarrow[\Gamma, A_1\cup A_2, B_1\cup B_2]{}^*\mathbf{abort}$, then by Theorem 4.3, either $<\!\!\alpha_1\!\!>_k<\!\!s_c\!\!>_s<\!\!h_{1c},h_{3c}\cdot h_{2c}, H_c\!\!>_h<\!\!N_{1c},N_{3c}\cup N_{2c}\!\!>_r\xLongrightarrow[\Gamma, A_1, B_1]{}^*\mathbf{abort}$ or $<\!\!\alpha_2\!\!>_k<\!\!s_c\!\!>_s<\!\!h_{2c},h_{3c}\cdot h_{1c},H_c\!\!>_h<\!\!N_{2c},N_{3c}\cup N_{1c}\!\!>_r\xLongrightarrow[\Gamma, A_2, B_2]{}^*\mathbf{abort}$. Neither case is possible because they contradict the validity of the hypothesis $\Gamma \vdash_{A_1,B_1} \exists X_1(o=<\!\!k_1\!\!>_k LS_1\land p_1)\Downarrow\exists X_2(o=<\!\!\cdot\!\!>_k LS_1'\land q_1)$ and $\Gamma \vdash_{A_2,B_2} \exists X_1(o=<\!\!k_2\!\!>_k LS_2\land p_2)\Downarrow\exists X_2(o=<\!\!\cdot\!\!>_k LS_2'\land q_2)$.
         \item if $<\!\!\alpha\!\!>_k<\!\!s_c\!\!>_s<\!\!h_{c},h_{3c},H_c\!\!>_h<\!\!N_{c},N_{3c}\!\!>_r\xLongrightarrow[\Gamma, A_1\cup A_2, B_1\cup B_2]{}^*<\!\!\cdot\!\!>_k<\!\!s_c'\!\!>_s<\!\!h_{c}',h_{3c}',H_c'\!\!>_h<\!\!N_{c}',N_{3c}'\!\!>_r$, then by Theorem 4.3, there are $h_{1c}'$, $h_{2c}'$, $N_{1c}'$, $N_{2c}'$ such that $N_{1c}'\cap N_{2c}'=\emptyset$, $N'=N_{1c}'\cup N_{2c}'$, $h'=h_{1c}'\cdot h_{2c}'$ and
      $<\!\!\alpha_1\!\!>_k<\!\!s_c\!\!>_s<\!\!h_{1c},h_{3c}\cdot h_{2c},H_c\!\!>_h<\!\!N_{1c},N_{3c}\cup N_{2c}\!\!>_r\xLongrightarrow[\Gamma, A_1, B_1]{}^*<\!\!\cdot\!\!>_k<\!\!s'\!\!>_s<\!\!h_{1c}',h_{3c}'\cdot h_{2c}',H_c'\!\!>_h<\!\!N_{1c}',N_{3c}'\cup N_{2c}'\!\!>_r$ and $<\!\!\alpha_2\!\!>_k<\!\!s\!\!>_s<\!\!h_{2c},h_{3c}\cdot h_{1c},H_c\!\!>_h<\!\!N_{2c},N_{3c}\cup N_{1c}\!\!>_r\xLongrightarrow[\Gamma, A_2, B_2]{}^*<\!\!\cdot\!\!>_k<\!\!s'\!\!>_s<\!\!h_{2c}',h_{3c}'\cdot h_{1c}',H_c'\!\!>_h<\!\!N_{2c}',N_{3c}'\cup N_{1c}'\!\!>_r$.\\
      By the induction hypothesis for $\Gamma \vdash_{A_1,B_1} \exists X_1(o=<\!\!k_1\!\!>_k LS_1\land p_1)\Downarrow\exists X_2(o=<\!\!\cdot\!\!>_k LS_1'\land q_1)$, we have $(<\!\!\cdot\!\!>_k<\!\!s_c'\!\!>_s<\!\!h_{1c}',h_{3c}'\cdot h_{2c}' ,H_c'\!\!>_h<\!\!N_{1c}',N_{3c}'\cup N_{2c}'\!\!>_r,\tau)\models_{B_1} \exists X_2(o=<\!\!\cdot\!\!>_k LS_1'\land q_1)$.\\
      By the induction hypothesis for $\Gamma \vdash_{A_2,B_2} \exists X_1(o=<\!\!k_2\!\!>_k LS_2\land p_2)\Downarrow\exists X_2(o=<\!\!\cdot\!\!>_k LS_2'\land q_2)$, we have $(<\!\!\cdot\!\!>_k<\!\!s_c'\!\!>_s<\!\!h_{2c}',h_{3c}'\cdot h_{1c}' ,H_c'\!\!>_h<\!\!N_{2c}',N_{3c}'\cup N_{1c}'\!\!>_r,\tau)\models_{B_2} \exists X_2(o=<\!\!\cdot\!\!>_k LS_2'\land q_2)$.\\
      Hence, $(<\!\!\cdot\!\!>_k<\!\!s_c'\!\!>_s<\!\!h_{c}',h_{3c}',H_c'\!\!>_h<\!\!N_{c}',N_{3c}'\!\!>_r,\tau)\models_{B_1,B_2}\exists X_2 (o=<\!\!\cdot\!\!>_k LS_f\land (q_1* q_2))$.
       \end{enumerate}
  \item  RESOURCE; Let $LS_1=<\!\!s\!\!>_s<\!\!h_1\cdot h,h_2,H\!\!>_h<\!\!N_1,N_2\!\!>_r$, $LS_2=<\!\!s'\!\!>_s<\!\!h_1'\cdot h',h_2',H'\!\!>_h<\!\!N_1',N_2'\!\!>_r$, $LS_1'=<\!\!s\!\!>_s<\!\!h_1,h_2,H\cdot h\!\!>_h<\!\!N_1,N_2\!\!>_r$, $LS_2'=<\!\!s'\!\!>_s<\!\!h_1',h_2',H'\cdot h'\!\!>_h<\!\!N_1',N_2'\!\!>_r$.\\
      Suppose $\Gamma,r(Y):R \vdash_{A,B} \exists X(o=<\!\!k\!\!>_k LS_1'\land p )\Downarrow \exists X'(o=<\!\!\cdot\!\!>_k LS_2'\land q)$ is well-formed and valid.\\
       Let $\tau:\mathit{Var}\to Int$ and $<\!\!s_c\!\!>_s<\!\!h_{1c}\cdot h_c,h_{2c},H_c\!\!>_h<\!\!N_{1c},N_{2c}\!\!>_r\in \Sigma_\Gamma$ and $<\!\!s_c\!\!>_s<\!\!h_{1c}\!\!>_h<\!\!\{\}\!\!>_r\models p$ and $<\!\!s_c\!\!>_s<\!\!h_{c}\!\!>_h<\!\!\{\}\!\!>_r\models R$, then
       $(<\!\!\mathbf{resource}\;r\;\mathbf{in}\;k\!\!>_k<\!\!s_c\!\!>_s<\!\!h_{1c}\cdot h_c,h_{2c},H_c\!\!>_h<\!\!N_{1c},N_{2c}\!\!>_r,\tau)\models_{B} \exists X(o=<\!\!\mathbf{resource}\;r\;\mathbf{in}\;k\!\!>_k LS_1\land (p*R))$.
       Let $\alpha\in [\![k]\!]_{[r:1]}$.
       \begin{enumerate}
         \item If $<\!\!\alpha\backslash r\!\!>_k<\!\!s_c\!\!>_s<\!\!h_{1c}\cdot h_c,h_{2c},H_c\!\!>_h<\!\!N_{1c},N_{2c}\!\!>_r\xLongrightarrow[\Gamma, A\cup Y, B]{}^*\mathbf{abort}$, then, by Theorem 4.4, $<\!\!\alpha\!\!>_k<\!\![s_c\mid r\mapsto 1]\!\!>_s<\!\!h_{1c},h_{2c},H_c\cdot h_c\!\!>_h<\!\!N_{1c},N_{2c}\!\!>_r\xLongrightarrow[(\Gamma,r(Y):R), A, B]{}^*\mathbf{abort}$. However, this is
             possible because they contradict the validity of the hypothesis $\Gamma,r(Y):R \vdash_{A,B} \exists X(o=<\!\!k\!\!>_k LS_1'\land p )\Downarrow \exists X'(o=<\!\!\cdot\!\!>_k LS_2'\land q)$.
         \item If $<\!\!\alpha\backslash r\!\!>_k<\!\!s_c\!\!>_s<\!\!h_{1c}\cdot h_c,h_{2c},H_c\!\!>_h<\!\!N_{1c},N_{2c}\!\!>_r\xLongrightarrow[\Gamma, A\cup Y, B]{}^*<\!\!\cdot\!\!>_k<\!\!s_c'\!\!>_s<\!\!h_{1c}'\cdot h_c',h_{2c}',H_c'\!\!>_h<\!\!N_{1c}',N_{2c}'\!\!>_r$, then by Theorem 4.4,\\
              a), either $<\!\!\alpha\!\!>_k<\!\![s_c\mid r\mapsto 1]\!\!>_s<\!\!h_{1c},h_{2c},H_c\cdot h_c\!\!>_h<\!\!N_{1c},N_{2c}\!\!>_r\xLongrightarrow[(\Gamma,r(Y):R), A, B]{}^*\mathbf{abort}$. This is
             possible as above.\\
              b), or $<\!\!\alpha\!\!>_k<\!\![s_c\mid r\mapsto 1]\!\!>_s<\!\!h_{1c},h_{2c},H_c\cdot h_c\!\!>_h<\!\!N_{1c},N_{2c}\!\!>_r\xLongrightarrow[(\Gamma,r(Y):R), A, B]{}^*<\!\!\cdot\!\!>_k<\!\![s_c'\mid r\mapsto 1]\!\!>_s<\!\!h_{1c}',h_{2c}',H_c'\cdot h_c'\!\!>_h<\!\!N_{1c}',N_{2c}'\!\!>_r$.\\
              By the induction hypothesis for $\Gamma,r(Y):R \vdash_{A,B} \exists X(o=<\!\!k\!\!>_k LS_1'\land p )\Downarrow \exists X'(o=<\!\!\cdot\!\!>_k LS_2'\land q)$, we have $(<\!\!\cdot\!\!>_k<\!\![s_c'\mid r\mapsto 1]\!\!>_s<\!\!h_{1c}',h_{2c}',H_c'\cdot h_c'\!\!>_h<\!\!N_{1c}',N_{2c}'\!\!>_r,\tau)\models_{B} \exists X'(o=<\!\!\cdot\!\!>_k LS_2'\land q)$.\\
              Hence, $(<\!\!\cdot\!\!>_k<\!\!s_c'\!\!>_s<\!\!h_{1c}'\cdot h_c',h_{2c}',H_c'\!\!>_h<\!\!N_{1c}',N_{2c}'\!\!>_r,\tau)\models_{B} \exists X'(o=<\!\!\cdot\!\!>_k LS_2\land (q\ast R))$.
       \end{enumerate}
\end{itemize}
\end{proof}
\section{Relation to CSL}
Before we discuss the relationship between CML and CSL, let's take a closer look at the assertions of CML. The assertion of CML has the form of $\Gamma \vdash_{A,B} \exists X((o=<\!\!k\!\!>_k<\!\!s\!\!>_s<\!\!h_1,h_2,H\!\!>_h<\!\!N_1,N_2\!\!>_r)\land p)\Downarrow \exists X'((o=<\!\!\cdot\!\!>_k<\!\!s'\!\!>_s<\!\!h_1',h_2',H'\!\!>_h<\!\!N_1',N_2'\!\!>_r)\land q)$.
Let's consider a special case where $A=\emptyset$ and $B=\mathbf{dom}(s)$ and $h_1=h_2=H=h_1'=h_2'=H'=\mathbf{emp}$ and $N_1=N_2=N_1'=N_2'=\emptyset$ and $\Gamma=\emptyset$ and $p,q$ do not contain separating conjunction, but is only a first-order logical formula. In this case, CML simplifies to matching logic. \\
The assertion of CSL has the form of $\Gamma \vdash_{A}\{p\}k\{q\}$. Note that the CML pattern provides
more information specifications than CSL. We fix a finite set of program identifiers $Z$ which is large
enough. We assume that $Z_{v},Z_{v}'$ are two
sets of variables called ``semantic clone" of $Z$. Let $s_z, s_z'$ map each program identifier $z$ in $Z$ to
its corresponding ``semantic clone" variable. $\mathbf{S2M}$ is a mapping taking CSL's assertion to CML's assertion.
$$\mathbf{S2M}(\Gamma \vdash_{A}\{p\}k\{q\})\equiv \Gamma \vdash_{A,\{\}}\exists Z\cup Z_v((o=<\!\!k\!\!>_k<\!\!s_z\!\!>_s<\!\!-,-,-\!\!>_h<\!\!-,-\!\!>_r)\land p)$$
$$\Downarrow \exists Z\cup Z_v'((o=<\!\!\cdot\!\!>_k<\!\!s_z'\!\!>_s<\!\!-,-,-\!\!>_h<\!\!-,-\!\!>_r)\land q)$$
where ``-" is a special notation, which we call ``free-match" notation, that is, these positions
are free from match when a concrete configuration $\gamma$ matches CML pattern. For example,
let $<\!\!s\!\!>_s<\!\!h_1,h_2,H\!\!>_h<\!\!N_1,N_2\!\!>_r\in \Sigma_\Gamma$ and $\tau:\mathit{Var}\to Int$, then $$(<\!\!k\!\!>_k<\!\!s\!\!>_s<\!\!h_1,h_2,H\!\!>_h<\!\!N_1,\!N_2\!\!>_r,\tau)\!\models_{\{\}}\!\! \exists Z\cup Z_v((o\!=\!<\!\!k\!\!>_k<\!\!s_z\!\!>_s<\!\!-,-,-\!\!>_h<\!\!-,-\!\!>_r)\land p)$$
iff
\begin{itemize}
  \item there exists some $\theta_\tau:\mathit{Var}\to Int$ with $\theta_\tau\!\!\upharpoonright_{\mathit{Var}/\{Z\cup Z_v\}}=\tau\!\!\upharpoonright_{\mathit{Var}/\{Z\cup Z_v\}}$;
  \item $s=\theta_\tau(s)$;
  \item $<\!\!s\!\!>_s<\!\!h_1\!\!>_h<\!\!\{\}\!\!>_r\models p$.
\end{itemize}
\begin{thm}
If the assertion $\Gamma \vdash_{A}\{p\}k\{q\}$ is valid in CSL, then the assertion $\mathbf{S2M}(\Gamma \vdash_{A}\{p\}k\{q\})$ is valid in CML.
\end{thm}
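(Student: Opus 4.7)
The plan is to unfold both validity definitions and exhibit a direct correspondence under the $\mathbf{S2M}$ translation, exploiting the facts that the key set is empty and every heap/resource component on the CML side is tagged with the free-match notation ``$-$''. First I would take an arbitrary concrete configuration $\gamma=<\!\!k\!\!>_k<\!\!s_c\!\!>_s<\!\!h_{1c},h_{2c},H_c\!\!>_h<\!\!N_{1c},N_{2c}\!\!>_r\in\Sigma_\Gamma$ and a valuation $\tau$ witnessing that $\gamma$ matches the CML precondition under $B=\emptyset$. By the definition of matching and the free-match convention, this reduces to: $s_c=\theta_\tau(s_z)$ for some extension $\theta_\tau$ of $\tau$ on $Z\cup Z_v$, and $<\!\!s_c\!\!>_s<\!\!h_{1c}\!\!>_h<\!\!\{\}\!\!>_r\models p$. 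In particular the precondition $p$ is satisfied on the owned sub-heap, which is exactly what CSL requires of the initial state.

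Next I would argue that the CML transition relation $\xLongrightarrow[\Gamma,A,\emptyset]{}^*$ on traces from $[\![k]\!]$ agrees, up to the bookkeeping of the owned/environment/available partition, with the CSL operational semantics used by Brookes. Because $B=\emptyset$, the environment move relation $\rightsquigarrow_{\Gamma,A,\emptyset}$ imposes exactly the CSL side condition that the environment not write into $A$. Hence every CML-derivation $\gamma\xLongrightarrow[\Gamma,A,\emptyset]{}^*\gamma'$ or $\gamma\xLongrightarrow[\Gamma,A,\emptyset]{}^*\mathbf{abort}$ yields a corresponding CSL computation from $<\!\!s_c\!\!>_s<\!\!h_{1c}\cdot h_{2c}\cdot H_c\!\!>_h<\!\!N_{1c}\cup N_{2c}\!\!>_r$ with the same outcome. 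This reduction step is a straightforward rule-by-rule check against the semantic rules for actions already laid out in the paper.

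Then I would invoke the assumed validity of $\Gamma\vdash_A\{p\}k\{q\}$ in CSL. This immediately gives the no-abort clause of CML validity. For the final-state clause, every terminating CSL computation ends in a state satisfying $q$ on the owned portion; translating back, the terminal CML configuration $\gamma'=<\!\!\cdot\!\!>_k<\!\!s_c'\!\!>_s<\!\!h_{1c}',h_{2c}',H_c'\!\!>_h<\!\!N_{1c}',N_{2c}'\!\!>_r$ satisfies $<\!\!s_c'\!\!>_s<\!\!h_{1c}'\!\!>_h<\!\!\{\}\!\!>_r\models q$. To complete the match against $\exists Z\cup Z_v'((o=<\!\!\cdot\!\!>_k<\!\!s_z'\!\!>_s<\!\!-,-,-\!\!>_h<\!\!-,-\!\!>_r)\land q)$ with $B=\emptyset$, I define a new $\theta_\tau'$ agreeing with $\tau$ off $Z\cup Z_v'$ by setting $\theta_\tau'(s_z')=s_c'$; the free-match convention absorbs all heap and resource components, and $p\mapsto q$ is witnessed by the CSL postcondition.

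The main obstacle will be the second step: formally pinning down the correspondence between CML's trace-based $\xLongrightarrow[\Gamma,A,B]{}$ relation (with its interleaved environment moves $\rightsquigarrow$) and CSL's operational semantics, in particular showing that the extra bookkeeping of $h_1,h_2,H$ and $N_1,N_2$ in CML does not create abort behaviours absent from CSL and, conversely, that every CSL non-abort computation lifts to a CML computation respecting the ownership discipline. Once that equivalence under the free-match interpretation is established, the remainder is essentially a matter of transcribing the CSL validity clauses into the CML template.
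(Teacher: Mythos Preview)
Your overall skeleton---unfold both validity notions, match the precondition, transfer the computation, match the postcondition---is exactly what the paper does, and your handling of the free-match notation and the choice of $\theta_\tau'$ for the postcondition is fine.

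The place where you diverge from the paper is your ``main obstacle.'' You frame step~2 as establishing a correspondence between the CML relation $\xLongrightarrow[\Gamma,A,\emptyset]{}^*$ and Brookes's original CSL semantics on merged states $<\!\!s_c\!\!>_s<\!\!h_{1c}\cdot h_{2c}\cdot H_c\!\!>_h<\!\!N_{1c}\cup N_{2c}\!\!>_r$, and you anticipate a rule-by-rule check to show no spurious aborts are introduced by the ownership bookkeeping. The paper sidesteps all of this: it simply declares ``our semantic model is also applicable to CSL,'' and then \emph{states} CSL validity of $\Gamma\vdash_A\{p\}k\{q\}$ using the very same split-state relation $\xLongrightarrow[\Gamma,A,\{\}]{}^*$ on local states $<\!\!s_c\!\!>_s<\!\!h_{1c},h_{2c},H_c\!\!>_h<\!\!N_{1c},N_{2c}\!\!>_r\in\Sigma_\Gamma$. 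In other words, in this paper CSL and CML (with $B=\emptyset$) share the operational semantics by fiat, so there is no separate Brookes-style computation to align with and no correspondence lemma to prove. Once you read CSL validity that way, the proof collapses to a couple of lines: the no-abort clause and the postcondition clause transfer verbatim, and the only work left is the trivial construction of $\tau$ and $\theta_\tau'$ to witness the matches on $B=\emptyset$.

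So your proposal is not wrong, but it is doing considerably more than the paper asks: you are implicitly proving that the paper's local-state semantics is adequate for CSL, which the paper takes as a standing assumption rather than an obligation of this theorem.
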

\begin{proof}
Our semantic model is also applicable to CSL. Let $<\!\!s_c\!\!>_s<\!\!h_{1c},h_{2c},H_c\!\!>_h<\!\!N_{1c},N_{2c}\!\!>_r\in \Sigma_\Gamma$ with $\mathbf{owned}(\Gamma)\cup A \subseteq \mathbf{dom}(s_c)$ and $<\!\!s_c\!\!>_s<\!\!h_{1c},h_{2c},H_c\!\!>_h<\!\!N_{1c},N_{2c}\!\!>_r\models p$.\\
If $\Gamma \vdash_{A}\{p\}k\{q\}$ is valid, then for all traces $\alpha\in [\![k]\!]$,
\begin{itemize}
  \item $\neg(<\!\!\alpha\!\!>_k<\!\!s_c\!\!>_s<\!\!h_{1c},h_{2c},H_c\!\!>_h<\!\!N_{1c},N_{2c}\!\!>_r\xLongrightarrow[\Gamma, A, \{\}]{}^*\mathbf{abort})$;
  \item If $<\!\!\alpha\!\!>_k<\!\!s_c\!\!>_s<\!\!h_{1c},h_{2c},H_c\!\!>_h<\!\!N_{1c},N_{2c}\!\!>_r\xLongrightarrow[\Gamma, A, \{\}]{}^*<\!\!\cdot\!\!>_k<\!\!s_c'\!\!>_s<\!\!h_{1c}',h_{2c}',H_c'\!\!>_h<\!\!N_{1c}',N_{2c}'\!\!>_r$, then $<\!\!s_c'\!\!>_s<\!\!h_{1c}',h_{2c}',H_c'\!\!>_h<\!\!N_{1c}',N_{2c}'\!\!>_r\models q$.
\end{itemize}
$\mathbf{S2M}(\Gamma \vdash_{A}\{p\}k\{q\})\equiv \Gamma \vdash_{A,\{\}}\exists Z\cup Z_v((o=<\!\!k\!\!>_k<\!\!s_z\!\!>_s<\!\!-,-,-\!\!>_h<\!\!-,-\!\!>_r)\land p)\Downarrow \exists Z\cup Z_v'((o=<\!\!\cdot\!\!>_k<\!\!s_z'\!\!>_s<\!\!-,-,-\!\!>_h<\!\!-,-\!\!>_r)\land q)$.\\
It's easy to find a $\tau:\mathit{Var}\to Int$ such that $(<\!\!k\!\!>_k<\!\!s_c\!\!>_s<\!\!h_{1c},h_{2c},H_c\!\!>_h<\!\!N_{1c},N_{2c}\!\!>_r,\tau)\!\models_{\{\}}\!\! \exists Z\cup Z_v((o=<\!\!k\!\!>_k<\!\!s_z\!\!>_s<\!\!-,-,-\!\!>_h<\!\!-,-\!\!>_r)\land p)$. Since $<\!\!\alpha\!\!>_k<\!\!s_c\!\!>_s<\!\!h_{1c},h_{2c},H_c\!\!>_h<\!\!N_{1c},N_{2c}\!\!>_r\xLongrightarrow[\Gamma, A, \{\}]{}^*<\!\!\cdot\!\!>_k<\!\!s_c'\!\!>_s<\!\!h_{1c}',h_{2c}',H_c'\!\!>_h<\!\!N_{1c}',N_{2c}'\!\!>_r$ and $<\!\!s_c'\!\!>_s<\!\!h_{1c}',h_{2c}',H_c'\!\!>_h<\!\!N_{1c}',N_{2c}'\!\!>_r\models q$, then
$(<\!\!\cdot\!\!>_k<\!\!s_c'\!\!>_s<\!\!h_{1c}',h_{2c}',H_c'\!\!>_h<\!\!N_{1c}',N_{2c}'\!\!>_r,\tau)\models_{\{\}}\exists Z\cup Z_v'((o=<\!\!\cdot\!\!>_k<\!\!s_z'\!\!>_s<\!\!-,-,-\!\!>_h<\!\!-,-\!\!>_r)\land q)$.\\
Hence, the assertion $\mathbf{S2M}(\Gamma \vdash_{A}\{p\}k\{q\})$ is valid in CML.
\end{proof}
Compared with CSL, CML has the following characteristics:
\begin{itemize}
  \item The CML pattern provides more information specifications than CSL; CSL uses separation logic formula to describe the state before and after process execution. Instead of separation logic formula, CML uses pattern. Separation logic formula is relatively abstract. However, pattern involves ``low-level" operational aspects, such as how to express the state. In CML pattern, $h_1,N_1$ represents part of the heap and part of the resource owned by the process, $h_2,N_2$ is owned by the environment, $H$ represents the remaining heap and satisfies the resource invariants of the currently available resources, and the ``key set" B represents the store portion owned by the process.
  \item $\mathbf{S2M}$ is a mapping taking CSL's assertion to CML's assertion. $\mathbf{S2M}(\Gamma \vdash_{A}\{p\}k\{q\})$ is a special form of CML assertion where key set B is empty. Hence, CSL can be seen as an instance of CML.
\end{itemize}
\section{Conclusions}
Matching logic works well on the sequential processes, but not on the shared-memory concurrent processes. Nevertheless, the matching logic inherently supports the viewpoint of ``resource separation". Inspired by the concurrent separation logic (CSL), we introduce Concurrent Matching Logic (CML). However, CML's pattern involves ``low-level" operational
aspects, such as how to express the state. In addition to a ``rely set" A, we also need a ``key set" B. We give the notion
of validity and prove CML is sound to our operational semantics model for concurrency. We also analyze the relationship between the CML and the CSL, and point out that under certain assumptions, CSL can be seen as an instance of CML.
There are many extensions for CSL, such as: permissions\cite{boyland2003checking}\cite{bornat2005permission}, locks-in-the-heap\cite{gotsman2007local}\cite{hobor2008oracle}. We hope to use CML to handle these extensions, which is also our follow-up work.
  %% the following bibliography is gererated manually for the sake of brevity
  %% only; please use a separate .bib file in your submission
\bibliographystyle{alpha}
\bibliography{lmcs}

\newcommand{\etalchar}[1]{$^{#1}$}
\begin{thebibliography}{GBC{\etalchar{+}}07}

\bibitem[BCOP05]{bornat2005permission}
Richard Bornat, Cristiano Calcagno, Peter O'Hearn, and Matthew Parkinson.
\newblock Permission accounting in separation logic.
\newblock In {\em Proceedings of the 32nd ACM SIGPLAN-SIGACT symposium on
  Principles of programming languages}, pages 259--270, 2005.

\bibitem[Boy03]{boyland2003checking}
John Boyland.
\newblock Checking interference with fractional permissions.
\newblock In {\em International Static Analysis Symposium}, pages 55--72.
  Springer, 2003.

\bibitem[Bro07]{brookes2007semantics}
Stephen Brookes.
\newblock A semantics for concurrent separation logic.
\newblock {\em Theoretical Computer Science}, 375(1-3):227--270, 2007.

\bibitem[Bro11]{brookes2011revisionist}
Stephen Brookes.
\newblock A revisionist history of concurrent separation logic.
\newblock {\em Electronic Notes in Theoretical Computer Science}, 276:5--28,
  2011.

\bibitem[CLR21]{chen2021matching}
Xiaohong Chen, Dorel Lucanu, and Grigore Ro{\c{s}}u.
\newblock Matching logic explained.
\newblock {\em Journal of Logical and Algebraic Methods in Programming},
  120:100638, 2021.

\bibitem[CR19]{chen2019matching}
Xiaohong Chen and Grigore Ro{\c{s}}u.
\newblock Matching $\mu$-logic.
\newblock In {\em 2019 34th Annual ACM/IEEE Symposium on Logic in Computer
  Science (LICS)}, pages 1--13. IEEE, 2019.

\bibitem[Dij68]{dijkstra1968cooperating}
Edsger~W Dijkstra.
\newblock Cooperating sequential processes.
\newblock In {\em The origin of concurrent programming}, pages 65--138.
  Springer, 1968.

\bibitem[GBC{\etalchar{+}}07]{gotsman2007local}
Alexey Gotsman, Josh Berdine, Byron Cook, Noam Rinetzky, and Mooly Sagiv.
\newblock Local reasoning for storable locks and threads.
\newblock In {\em Asian Symposium on Programming Languages And Systems}, pages
  19--37. Springer, 2007.

\bibitem[HAN08]{hobor2008oracle}
Aquinas Hobor, Andrew~W Appel, and Francesco~Zappa Nardelli.
\newblock Oracle semantics for concurrent separation logic.
\newblock In {\em European Symposium on Programming}, pages 353--367. Springer,
  2008.

\bibitem[Hoa72]{hoare1972towards}
Charles Antony~Richard Hoare.
\newblock Towards a theory of parallel programming.
\newblock In {\em The origin of concurrent programming}, pages 231--244.
  Springer, 1972.

\bibitem[MOM96]{marti1996rewriting}
Narciso Mart{\'\i}-Oliet and Jos{\'e} Meseguer.
\newblock Rewriting logic as a logical and semantic framework.
\newblock {\em Electronic Notes in Theoretical Computer Science}, 4:190--225,
  1996.

\bibitem[OG76a]{owicki1976axiomatic}
Susan Owicki and David Gries.
\newblock An axiomatic proof technique for parallel programs i.
\newblock {\em Acta informatica}, 6(4):319--340, 1976.

\bibitem[OG76b]{owicki1976verifying}
Susan Owicki and David Gries.
\newblock Verifying properties of parallel programs: An axiomatic approach.
\newblock {\em Communications of the ACM}, 19(5):279--285, 1976.

\bibitem[O'h07]{o2007resources}
Peter~W O'hearn.
\newblock Resources, concurrency, and local reasoning.
\newblock {\em Theoretical computer science}, 375(1-3):271--307, 2007.

\bibitem[O'H19]{o2019separation}
Peter O'Hearn.
\newblock Separation logic.
\newblock {\em Communications of the ACM}, 62(2):86--95, 2019.

\bibitem[OYR09]{o2009separation}
Peter~W O'Hearn, Hongseok Yang, and John~C Reynolds.
\newblock Separation and information hiding.
\newblock {\em ACM Transactions on Programming Languages and Systems (TOPLAS)},
  31(3):1--50, 2009.

\bibitem[RES10]{rocsu2010matching}
Grigore Ro{\c{s}}u, Chucky Ellison, and Wolfram Schulte.
\newblock Matching logic: An alternative to hoare/floyd logic.
\newblock In {\em International Conference on Algebraic Methodology and
  Software Technology}, pages 142--162. Springer, 2010.

\bibitem[Rey02]{reynolds2002separation}
John~C Reynolds.
\newblock Separation logic: A logic for shared mutable data structures.
\newblock In {\em Proceedings 17th Annual IEEE Symposium on Logic in Computer
  Science}, pages 55--74. IEEE, 2002.

\bibitem[Ros17a]{rosu2017k}
Grigore Rosu.
\newblock K: A semantic framework for programming languages and formal analysis
  tools.
\newblock {\em Dependable Software Systems Engineering}, 50:186, 2017.

\bibitem[Ros17b]{rosu2017matching}
Grigore Rosu.
\newblock Matching logic.
\newblock {\em arXiv preprint arXiv:1705.06312}, 2017.

\bibitem[R{\c{S}}10]{rosu2010overview}
Grigore Ro{\c{s}}u and Traian~Florin {\c{S}}erb{\u{a}}nut{\u{a}}.
\newblock An overview of the k semantic framework.
\newblock {\em The Journal of Logic and Algebraic Programming}, 79(6):397--434,
  2010.

\bibitem[RS11]{rosu2011matching}
Grigore Rosu and Andrei Stefanescu.
\newblock Matching logic: a new program verification approach (nier track).
\newblock In {\em 2011 33rd International Conference on Software Engineering
  (ICSE)}, pages 868--871. IEEE, 2011.

\bibitem[R{\c{S}}12]{rocsu2012hoare}
Grigore Ro{\c{s}}u and Andrei {\c{S}}tef{\u{a}}nescu.
\newblock From hoare logic to matching logic reachability.
\newblock In {\em International Symposium on Formal Methods}, pages 387--402.
  Springer, 2012.

\end{thebibliography}
\appendix
\section{}
\begin{proof}{(Lemma 4.1)}
Case analysis for each form of action. Most cases are straightforward. Here are the cases for $\mathbf{acq}\;r$ and $\mathbf{rel}\;r$.
\begin{itemize}
  \item For $\lambda=\mathbf{acq}\;r$; $r\notin N\cup N_3$,
  Obviously, $<\!\!\mathbf{acq}\;r\!\!>_k<\!\!s\!\!>_s<\!\!h,h_3,H\!\!>_h<\!\!N,N_3\!\!>_r\xLongrightarrow[\Gamma, A_1,B_1]{}^*\mathbf{abort}$ is vacuous. If $<\!\!\mathbf{acq}\;r\!\!>_k<\!\!s\!\!>_s<\!\!h,h_3,H\!\!>_h<\!\!N,N_3\!\!>_r\xLongrightarrow[\Gamma, A_1\cup A_2, B_1\cup B_2]{}<\!\!\cdot\!\!>_k<\!\!s'\!\!>_s<\!\!h',h_3',H'\!\!>_h<\!\!N',N_3'\!\!>_r$,
  then:
  \begin{enumerate}
    \item there are actions $\mu_1,\mu_2,\ldots,\mu_n$ such that for $1\leq i\leq n$, $\mathbf{writes}(\mu_i)\cap (A_1\cup A_2) =\emptyset$ and $\mathbf{writes}(\mu_i)\cap (B_1\cup B_2) =\emptyset$ and
  $<\!\!\mathbf{acq}\;r\!\!>_k<\!\!s\!\!>_s<\!\!h,h_3,H\!\!>_h<\!\!N,N_3\!\!>_r\rightsquigarrow_{\Gamma, A' ,B'}^*<\!\!\mathbf{acq}\;r\!\!>_k<\!\!s''\!\!>_s<\!\!h,h_3'',H''\!\!>_h<\!\!N,N_3''\!\!>_r$;
    \item there is $h_r$ such that $h_r \subseteq H''$ and $<\!\!s''\!\!>_s<\!\!h_r\!\!>_h<\!\!\{\}\!\!>_r\models \mathbf{inv}(\Gamma\upharpoonright r)$ and $<\!\!\mathbf{acq}\;r\!\!>_k<\!\!s''\!\!>_s<\!\!h,h_3'',H''\!\!>_h<\!\!N,N_3''\!\!>_r\xrightarrow[\Gamma, A_1\cup A_2,B_1\cup B_2]{}<\!\!\cdot\!\!>_k<\!\![s''\mid r\mapsto 0]\!\!>_s<\!\!h\cdot h_r,h_3'',H''-h_r\!\!>_h<\!\!N\cup\{r\},N_3''\!\!>_r$;
    \item there are $\nu_1,\nu_2,\ldots,\nu_m$ such that for $1\leq j\leq m$, $\mathbf{writes}(\nu_j)\cap (A_1\cup A_2) =\emptyset$ and $\mathbf{writes}(\nu_j)\cap (B_1\cup B_2) =\emptyset$ and $<\!\!\cdot\!\!>_k<\!\![s''\mid r\mapsto 0]\!\!>_s<\!\!h\cdot h_r,h_3'',H''-h_r\!\!>_h<\!\!N\cup\{r\},N_3''\!\!>_r\rightsquigarrow_{\Gamma, A' ,B'}^*<\!\!\cdot\!\!>_k<\!\!s'\!\!>_s<\!\!h\cdot h_r,h_3',H'\!\!>_h<\!\!N\cup\{r\},N_3'\!\!>_r$.
  \end{enumerate}
  Since $h_2\bot h_3$ and (1), then
  $$<\!\!\mathbf{acq}\;r\!\!>_k<\!\!s\!\!>_s<\!\!h_1,h_2\cdot h_3,H\!\!>_h<\!\!N_1,N_3\cup N_2\!\!>_r\rightsquigarrow_{\Gamma, A' ,B'}^*$$
  $$<\!\!\mathbf{acq}\;r\!\!>_k<\!\!s''\!\!>_s<\!\!h_1,h_2\cdot h_3'',H''\!\!>_h<\!\!N_1,N_3''\cup N_2\!\!>_r$$
 By (2)
  $$<\!\!\mathbf{acq}\;r\!\!>_k<\!\!s''\!\!>_s<\!\!h_1,h_2\cdot h_3'',H''\!\!>_h<\!\!N_1,N_3''\cup N_2\!\!>_r\xrightarrow[\Gamma, A_1,B_1]{}$$
  $$<\!\!\cdot\!\!>_k<\!\![s''\mid r\mapsto 0]\!\!>_s<\!\!h_1\cdot h_r,h_2\cdot h_3'',H''-h_r\!\!>_h<\!\!N_1\cup\{r\},N_3''\cup N_2\!\!>_r$$
  By (3)
  $$<\!\!\cdot\!\!>_k<\!\![s''\mid r\mapsto 0]\!\!>_s<\!\!h_1\cdot h_r,h_2\cdot h_3'',H''-h_r\!\!>_h<\!\!N_1\cup\{r\},N_3''\cup N_2\!\!>_r\rightsquigarrow_{\Gamma, A' ,B'}^*$$
  $$<\!\!\cdot\!\!>_k<\!\!s'\!\!>_s<\!\!h_1\cdot h_r,h_2\cdot h_3',H'\!\!>_h<\!\!N_1\cup\{r\},N_3'\cup N_2\!\!>_r$$
  Hence, $<\!\!\mathbf{acq}\;r\!\!>_k<\!\!s\!\!>_s<\!\!h_1,h_2\cdot h_3,H\!\!>_h<\!\!N_1,N_3\cup N_2\!\!>_r\xLongrightarrow[\Gamma, A_1, B_1]{}<\!\!\cdot\!\!>_k<\!\!s'\!\!>_s<\!\!h_1\cdot h_r,h_2\cdot h_3',H'\!\!>_h<\!\!N_1\cup\{r\},N_3'\cup N_2\!\!>_r$.
 The result thus holds for this case.
  \item For $\lambda=\mathbf{rel}\;r$; $r\in N$. If $<\!\!\mathbf{rel}\;r\!\!>_k<\!\!s\!\!>_s<\!\!h,h_3,H\!\!>_h<\!\!N,N_3\!\!>_r\xLongrightarrow[\Gamma, A_1\cup A_2,B_1\cup B_2]{}^*\mathbf{abort}$, then there are actions $\mu_1,\mu_2,\ldots,\mu_n$ such that for $1\leq i\leq n, 0\leq n$, $\mathbf{writes}(\mu_i)\cap (A_1\cup A_2) =\emptyset$ and $\mathbf{writes}(\mu_i)\cap (B_1\cup B_2) =\emptyset$ and
  $<\!\!\mathbf{rel}\;r\!\!>_k<\!\!s\!\!>_s<\!\!h,h_3,H\!\!>_h<\!\!N,N_3\!\!>_r\rightsquigarrow_{\Gamma, A' ,B'}^*<\!\!\mathbf{rel}\;r\!\!>_k<\!\!s''\!\!>_s<\!\!h,h_3'',H''\!\!>_h<\!\!N,N_3''\!\!>_r$ and $\forall h_r\subseteq h$, $<\!\!s''\!\!>_s<\!\!h_r\!\!>_h<\!\!\{\}\!\!>_r \models \neg\mathbf{inv}(\Gamma\upharpoonright r)$.
  Since $h=h_1\cdot h_2$, then $\forall h_r\subseteq h_1$, $<\!\!s''\!\!>_s<\!\!h_r\!\!>_h<\!\!\{\}\!\!>_r \models \neg\mathbf{inv}(\Gamma\upharpoonright r)$ and hence that
      $<\!\!\mathbf{rel}\;r\!\!>_k<\!\!s\!\!>_s<\!\!h_1,h_3\cdot h_2,H\!\!>_h<\!\!N_1,N_3\cap N_2\!\!>_r\xLongrightarrow[\Gamma, A_1,B_1]{}^*\mathbf{abort}$. \\
  If $<\!\!\mathbf{rel}\;r\!\!>_k<\!\!s\!\!>_s<\!\!h,h_3,H\!\!>_h<\!\!N,N_3\!\!>_r\xLongrightarrow[\Gamma, A_1\cup A_2, B_1\cup B_2]{}<\!\!\cdot\!\!>_k<\!\!s'\!\!>_s<\!\!h',h_3',H'\!\!>_h<\!\!N',N_3'\!\!>_r$, then
      \begin{enumerate}
    \item there are actions $\mu_1,\mu_2,\ldots,\mu_n$ such that for $1\leq i\leq n$, $\mathbf{writes}(\mu_i)\cap (A_1\cup A_2) =\emptyset$ and $\mathbf{writes}(\mu_i)\cap (B_1\cup B_2) =\emptyset$ and
  $<\!\!\mathbf{rel}\;r\!\!>_k<\!\!s\!\!>_s<\!\!h,h_3,H\!\!>_h<\!\!N,N_3\!\!>_r\rightsquigarrow_{\Gamma, A' ,B'}^*<\!\!\mathbf{rel}\;r\!\!>_k<\!\!s''\!\!>_s<\!\!h,h_3'',H''\!\!>_h<\!\!N,N_3''\!\!>_r$;
    \item there is $h_r$ such that $h_r \subseteq h$ and $<\!\!s''\!\!>_s<\!\!h_r\!\!>_h<\!\!\{\}\!\!>_r\models \mathbf{inv}(\Gamma\upharpoonright r)$ and $<\!\!\mathbf{rel}\;r\!\!>_k<\!\!s''\!\!>_s<\!\!h,h_3'',H''\!\!>_h<\!\!N,N_3''\!\!>_r\xrightarrow[\Gamma, A_1\cup A_2,B_1\cup B_2]{}<\!\!\cdot\!\!>_k<\!\![s''\mid r\mapsto 1]\!\!>_s<\!\!h-h_r,h_3'',H''\cdot h_r\!\!>_h<\!\!N-\{r\},N_3''\!\!>_r$;
    \item there are $\nu_1,\nu_2,\ldots,\nu_m$ such that for $1\leq j\leq m$, $\mathbf{writes}(\nu_j)\cap (A_1\cup A_2) =\emptyset$ and $\mathbf{writes}(\nu_j)\cap (B_1\cup B_2) =\emptyset$ and $<\!\!\cdot\!\!>_k<\!\![s''\mid r\mapsto 1]\!\!>_s<\!\!h-h_r,h_3'',H''\cdot h_r\!\!>_h<\!\!N-\{r\},N_3''\!\!>_r\rightsquigarrow_{\Gamma, A' ,B'}^*<\!\!\cdot\!\!>_k<\!\!s'\!\!>_s<\!\!h-h_r,h_3',H'\!\!>_h<\!\!N-\{r\},N_3'\!\!>_r$.
  \end{enumerate}
  Since $h_2\bot h_3$ and (1), then
  $$<\!\!\mathbf{rel}\;r\!\!>_k<\!\!s\!\!>_s<\!\!h_1,h_2\cdot h_3,H\!\!>_h<\!\!N_1,N_3\cup N_2\!\!>_r\rightsquigarrow_{\Gamma, A' ,B'}^*$$
  $$<\!\!\mathbf{rel}\;r\!\!>_k<\!\!s''\!\!>_s<\!\!h_1,h_2\cdot h_3'',H''\!\!>_h<\!\!N_1,N_3''\cup N_2\!\!>_r$$
  If $h_r\nsubseteq h_1$, then
  $<\!\!\mathbf{rel}\;r\!\!>_k<\!\!s''\!\!>_s<\!\!h_1,h_2\cdot h_3'',H''\!\!>_h<\!\!N_1,N_3''\cup N_2\!\!>_r\xrightarrow[\Gamma, A_1,B_1]{}\mathbf{abort}$  and hence that $<\!\!\mathbf{rel}\;r\!\!>_k<\!\!s\!\!>_s<\!\!h_1,h_3\cdot h_2,H\!\!>_h<\!\!N_1,N_3\cap N_2\!\!>_r\xLongrightarrow[\Gamma, A_1,B_1]{}^*\mathbf{abort}$.\\
  Otherwise, $h_r\subseteq h_1$, by (2),
  $$<\!\!\mathbf{rel}\;r\!\!>_k<\!\!s''\!\!>_s<\!\!h_1,h_2\cdot h_3'',H''\!\!>_h<\!\!N_1,N_3''\cup N_2\!\!>_r\xrightarrow[\Gamma, A_1,B_1]{}$$
  $$<\!\!\cdot\!\!>_k<\!\![s''\mid r\mapsto 1]\!\!>_s<\!\!h_1-h_r,h_2\cdot h_3'',H''\cdot h_r\!\!>_h<\!\!N_1-\{r\},N_3''\cup N_2\!\!>_r$$
  By (3)
  $$<\!\!\cdot\!\!>_k<\!\![s''\mid r\mapsto 1]\!\!>_s<\!\!h_1-h_r,h_2\cdot h_3'',H''\cdot h_r\!\!>_h<\!\!N_1-\{r\},N_3''\cup N_2\!\!>_r\rightsquigarrow_{\Gamma, A' ,B'}^*$$
  $$<\!\!\cdot\!\!>_k<\!\!s'\!\!>_s<\!\!h_1-h_r,h_2\cdot h_3',H'\!\!>_h<\!\!N_1-\{r\},N_3'\cup N_2\!\!>_r$$
  Hence, $<\!\!\mathbf{rel}\;r\!\!>_k<\!\!s\!\!>_s<\!\!h_1,h_2\cdot h_3,H\!\!>_h<\!\!N_1,N_3\cup N_2\!\!>_r\xLongrightarrow[\Gamma, A_1\cup A_2, B_1\cup B_2]{}<\!\!\cdot\!\!>_k<\!\!s'\!\!>_s<\!\!h_1-h_r,h_2\cdot h_3',H'\!\!>_h<\!\!N_1-\{r\},N_3'\cup N_2\!\!>_r$. The result thus holds for this case.
\end{itemize}
\end{proof}
\begin{proof}{(Theorem 4.3)}
By induction on the lengths of $\alpha_1$ and $\alpha_2$.
\begin{itemize}
  \item when one of the traces is empty.\\
  Without loss of generality, assume that $\alpha_2=\epsilon$, then $\alpha=\alpha_1$
  \begin{enumerate}
    \item If $<\!\!\alpha\!\!>_k<\!\!s\!\!>_s<\!\!h,h_3,H\!\!>_h<\!\!N,N_3\!\!>_r\xLongrightarrow[\Gamma, A_1\cup A_2, B_1\cup B_2]{}^*\mathbf{abort}$, then $<\!\!\alpha_1\!\!>_k<\!\!s\!\!>_s<\!\!h_1,h_3\cdot h_2,H\!\!>_h<\!\!N_1,N_3\cup N_2\!\!>_r\xLongrightarrow[\Gamma, A_1,B_1]{}^*\mathbf{abort}$ by Lemma 4.2.
    \item If $<\!\!\alpha\!\!>_k<\!\!s\!\!>_s<\!\!h,h_3,H\!\!>_h<\!\!N,N_3\!\!>_r\xLongrightarrow[\Gamma, A_1\cup A_2, B_1\cup B_2]{}^*<\!\!\cdot\!\!>_k<\!\!s'\!\!>_s<\!\!h',h_3',H'\!\!>_h<\!\!N',N_3'\!\!>_r$, by Lemma 4.2, there are $h_1'$, $h_2'=h_2$, $N_1'$, $N_2'=N_2$ such that $N_1'\cap N_2'=\emptyset$, $N'=N_1'\cup N_2'$, $h'=h_1'\cdot h_2'$ and $<\!\!\alpha_1\!\!>_k<\!\!s\!\!>_s<\!\!h_1,h_3\cdot h_2,H\!\!>_h<\!\!N_1,N_3\cup N_2\!\!>_r\xLongrightarrow[\Gamma, A_1, B_1]{}^*<\!\!\cdot\!\!>_k<\!\!s'\!\!>_s<\!\!h_1',h_3'\cdot h_2',H'\!\!>_h<\!\!N_1',N_3'\cup N_2'\!\!>_r$.
  \end{enumerate}
  The result follows.
  \item $\alpha_1=\lambda_1\alpha_1'$, $\alpha_2=\lambda_2\alpha_2'$, and $\alpha\in \alpha_{1\{A_1\}}\|_{\{A_2\}}\alpha_2$.\\
  If $<\!\!\alpha\!\!>_k<\!\!s\!\!>_s<\!\!h,h_3,H\!\!>_h<\!\!N,N_3\!\!>_r\xLongrightarrow[\Gamma, A_1\cup A_2, B_1\cup B_2]{}^*\mathbf{abort}$, then $\mathbf{writes}(\lambda_1)\cap \mathbf{free}(\lambda_2)\neq\emptyset$ or
    $(\mathbf{writes}(\lambda_2)\cap \mathbf{free}(\lambda_1)\neq\emptyset$.
    Since $\mathbf{writes}(\alpha_1)\cap A_2=\emptyset$ and $\mathbf{writes}(\alpha_2)\cap A_1=\emptyset$ and $N_1\cap N_2=\emptyset$ and $h_1\bot h_2$, it follows that either $<\!\!\lambda_1\!\!>_k<\!\!s\!\!>_s<\!\!h_1,h_3\cdot h_2,H\!\!>_h<\!\!N_1,N_3\cup N_2\!\!>_r\xLongrightarrow[\Gamma, A_1,B_1]{}^*\mathbf{abort}$ or $<\!\!\lambda_2\!\!>_k<\!\!s\!\!>_s<\!\!h_2,h_3\cdot h_1,H\!\!>_h<\!\!N_2,N_3\cup N_1\!\!>_r\xLongrightarrow[\Gamma, A_2, B_2]{}^*\mathbf{abort}$. The result then follows.\\
    Otherwise, without loss of generality, assume that: $\alpha=\lambda\alpha_3$ and  $\alpha_3\in\alpha_{1}'\;_{\{A_1\}}\|_{\{A_2\}}\mu\alpha_2'$.
    \begin{enumerate}
      \item If $<\!\!\alpha\!\!>_k<\!\!s\!\!>_s<\!\!h,h_3,H\!\!>_h<\!\!N,N_3\!\!>_r\xLongrightarrow[\Gamma, A_1\cup A_2, B_1\cup B_2]{}^*\mathbf{abort}$, then either $<\!\!\lambda\!\!>_k<\!\!s\!\!>_s<\!\!h,h_3,H\!\!>_h<\!\!N,N_3\!\!>_r\xLongrightarrow[\Gamma, A_1\cup A_2, B_1\cup B_2]{}^*\mathbf{abort}$ or there is a local state $<\!\!s''\!\!>_s<\!\!h'',h_3'',H''\!\!>_h<\!\!N'',N_3''\!\!>_r$ such that $<\!\!\lambda\!\!>_k<\!\!s\!\!>_s<\!\!h,h_3,H\!\!>_h<\!\!N,N_3\!\!>_r\xLongrightarrow[\Gamma, A_1\cup A_2, B_1\cup B_2]{}^*$\\$<\!\!\cdot\!\!>_k<\!\!s''\!\!>_s<\!\!h'',h_3'',H''\!\!>_h<\!\!N'',N_3''\!\!>_r$ and $<\!\!\alpha_3\!\!>_k<\!\!s''\!\!>_s<\!\!h'',h_3'',H''\!\!>_h<\!\!N'',N_3''\!\!>_r$\\$\xLongrightarrow[\Gamma, A_1\cup A_2, B_1\cup B_2]{}^*\mathbf{abort}$.\\
          In the first subcase, by Lemma 4.1, we get $<\!\!\lambda\!\!>_k<\!\!s\!\!>_s<\!\!h_1,h_3\cdot h_2,H\!\!>_h<\!\!N_1,N_3\cup N_2\!\!>_r\xLongrightarrow[\Gamma, A_1,B_1]{}^*\mathbf{abort}$. So $<\!\!\alpha_1\!\!>_k<\!\!s\!\!>_s<\!\!h_1,h_3\cdot h_2,H\!\!>_h<\!\!N_1,N_3\cup N_2\!\!>_r\xLongrightarrow[\Gamma, A_1,B_1]{}^*\mathbf{abort}$\\
          In the second subcase, by Lemma 4.1,\\
           either $<\!\!\lambda\!\!>_k<\!\!s\!\!>_s<\!\!h_1,h_3\cdot h_2,H\!\!>_h<\!\!N_1,N_3\cup N_2\!\!>_r\xLongrightarrow[\Gamma, A_1,B_1]{}^*\mathbf{abort}$ and  hence $<\!\!\alpha_1\!\!>_k<\!\!s\!\!>_s<\!\!h_1,h_3\cdot h_2,H\!\!>_h<\!\!N_1,N_3\cup N_2\!\!>_r\xLongrightarrow[\Gamma, A_1,B_1]{}^*\mathbf{abort}$;\\
            or there are $h_1''$, $N_1''$ such that $h''=h_1''\cdot h_2$ and $N_1''\cap N_2=\emptyset$ and $N''=N_1''\cup N_2$ and $<\!\!\lambda\!\!>_k<\!\!s\!\!>_s<\!\!h_1,h_3\cdot h_2,H\!\!>_h<\!\!N_1,N_3\cup N_2\!\!>_r\xLongrightarrow[\Gamma, A_1, B_1]{}^*<\!\!\cdot\!\!>_k<\!\!s''\!\!>_s<\!\!h_1'',h_3''\cdot h_2,H''\!\!>_h<\!\!N_1'',N_3''\cup N_2\!\!>_r$.\\
            By the induction hypothesis for $\alpha_3$, we have \\
            either $<\!\!\alpha_1'\!\!>_k<\!\!s''\!\!>_s<\!\!h_1'',h_3''\cdot h_2,H''\!\!>_h<\!\!N_1'',N_3''\cup N_2\!\!>_r\xLongrightarrow[\Gamma, A_1,B_1]{}^*\mathbf{abort}$ and hence $<\!\!\alpha_1\!\!>_k<\!\!s\!\!>_s<\!\!h_1,h_3\cdot h_2,H\!\!>_h<\!\!N_1,N_3\cup N_2\!\!>_r\xLongrightarrow[\Gamma, A_1,B_1]{}^*\mathbf{abort}$.\\
            or $<\!\!\alpha_2\!\!>_k<\!\!s''\!\!>_s<\!\!h_2,h_3''\cdot h_1'',H''\!\!>_h<\!\!N_2,N_3''\cup N''\!\!>_r\xLongrightarrow[\Gamma, A_2,B_2]{}^*\mathbf{abort}$. Since $\mathbf{writes}(\alpha_1)\cap A_2 =\emptyset$ and $\mathbf{writes}(\alpha_1)\cap B_2 =\emptyset$, by environment move, we get $<\!\!\alpha_2\!\!>_k<\!\!s\!\!>_s<\!\!h_2,h_3\cdot h_1,H\!\!>_h<\!\!N_2,N_3\cup N\!\!>_r\rightsquigarrow_{\Gamma, A_2, B_2}<\!\!\alpha_2\!\!>_k<\!\!s''\!\!>_s<\!\!h_2,h_3''\cdot h_1'',H''\!\!>_h<\!\!N_2,N_3''\cup N''\!\!>_r$ and hence $<\!\!\alpha_2\!\!>_k<\!\!s\!\!>_s<\!\!h_2,h_3\cdot h_1,H\!\!>_h<\!\!N_2,N_3\cup N\!\!>_r\xLongrightarrow[\Gamma, A_2,B_2]{}^*\mathbf{abort}$. The result then follows.
      \item If $<\!\!\alpha\!\!>_k<\!\!s\!\!>_s<\!\!h,h_3,H\!\!>_h<\!\!N,N_3\!\!>_r\xLongrightarrow[\Gamma, A_1\cup A_2, B_1\cup B_2]{}^*<\!\!\cdot\!\!>_k<\!\!s'\!\!>_s<\!\!h',h_3',H'\!\!>_h<\!\!N',N_3'\!\!>_r$, then there is a local state $<\!\!s''\!\!>_s<\!\!h'',h_3'',H''\!\!>_h<\!\!N'',N_3''\!\!>_r$ such that $<\!\!\lambda\!\!>_k<\!\!s\!\!>_s<\!\!h,h_3,H\!\!>_h<\!\!N,N_3\!\!>_r\xLongrightarrow[\Gamma, A_1\cup A_2, B_1\cup B_2]{}^*<\!\!\cdot\!\!>_k<\!\!s''\!\!>_s<\!\!h'',h_3'',H''\!\!>_h<\!\!N'',N_3''\!\!>_r$ and $<\!\!\alpha_3\!\!>_k<\!\!s''\!\!>_s<\!\!h'',h_3'',H''\!\!>_h<\!\!N'',N_3''\!\!>_r\xLongrightarrow[\Gamma, A_1\cup A_2, B_1\cup B_2]{}^*<\!\!\cdot\!\!>_k<\!\!s'\!\!>_s<\!\!h',h_3',H'\!\!>_h<\!\!N',N_3'\!\!>_r$.\\
          Use Lemma 4.1 for the first step. \\
          If $<\!\!\lambda\!\!>_k<\!\!s\!\!>_s<\!\!h_1,h_3\cdot h_2,H\!\!>_h<\!\!N_1,N_3\cup N_2\!\!>_r\xLongrightarrow[\Gamma, A_1,B_1]{}^*\mathbf{abort}$, we get $<\!\!\alpha_1\!\!>_k<\!\!s\!\!>_s<\!\!h_1,h_3\cdot h_2,H\!\!>_h<\!\!N_1,N_3\cup N_2\!\!>_r\xLongrightarrow[\Gamma, A_1,B_1]{}^*\mathbf{abort}$ as above.\\
          otherwise, there are $h_1''$, $N_1''$ such that $h''=h_1''\cdot h_2$ and $N_1''\cap N_2=\emptyset$ and $N''=N_1''\cup N_2$ and $<\!\!\lambda\!\!>_k<\!\!s\!\!>_s<\!\!h_1,h_3\cdot h_2,H\!\!>_h<\!\!N_1,N_3\cup N_2\!\!>_r\xLongrightarrow[\Gamma, A_1, B_1]{}^*<\!\!\cdot\!\!>_k<\!\!s''\!\!>_s<\!\!h_1'',h_3''\cdot h_2,H''\!\!>_h<\!\!N_1'',N_3''\cup N_2\!\!>_r$.\\
          The induction hypothesis for $\alpha_3$ implies that\\
          a), either $<\!\!\alpha_1'\!\!>_k<\!\!s''\!\!>_s<\!\!h_1'',h_3''\cdot h_2,H''\!\!>_h<\!\!N_1'',N_3''\cup N_2\!\!>_r\xLongrightarrow[\Gamma, A_1, B_1]{}^*\mathbf{abort}$, so that $<\!\!\alpha_1\!\!>_k<\!\!s\!\!>_s<\!\!h_1,h_3\cdot h_2,H\!\!>_h<\!\!N_1,N_3\cup N_2\!\!>_r\xLongrightarrow[\Gamma, A_1, B_1]{}^*\mathbf{abort}$.\\
          b), or $<\!\!\alpha_2\!\!>_k<\!\!s''\!\!>_s<\!\!h_2,h_3''\cdot h_1'',H''\!\!>_h<\!\!N_2,N_3''\cup N_1''\!\!>_r\xLongrightarrow[\Gamma, A_2, B_2]{}^*\mathbf{abort}$. we get $<\!\!\alpha_2\!\!>_k<\!\!s\!\!>_s<\!\!h_2,h_3\cdot h_1,H\!\!>_h<\!\!N_2,N_3\cup N_1\!\!>_r\xLongrightarrow[\Gamma, A_2, B_2]{}^*\mathbf{abort}$ as above.\\
          c), or there are $h_1'$, $h_2'$, $N_1'$, $N_2'$ such that $N_1'\cap N_2'=\emptyset$, $N'=N_1'\cup N_2'$, $h'=h_1'\cdot h_2'$ and
      $<\!\!\alpha_1'\!\!>_k<\!\!s''\!\!>_s<\!\!h_1'',h_3''\cdot h_2,H''\!\!>_h<\!\!N_1'',N_3''\cup N_2\!\!>_r\xLongrightarrow[\Gamma, A_1, B_1]{}^*<\!\!\cdot\!\!>_k<\!\!s'\!\!>_s<\!\!h_1',h_3'\cdot h_2',H'\!\!>_h<\!\!N_1',N_3'\cup N_2'\!\!>_r$ and $<\!\!\alpha_2\!\!>_k<\!\!s''\!\!>_s<\!\!h_2,h_3''\cdot h_1'',H''\!\!>_h<\!\!N_2,N_3''\cup N_1''\!\!>_r\xLongrightarrow[\Gamma, A_2, B_2]{}^*<\!\!\cdot\!\!>_k<\!\!s'\!\!>_s<\!\!h_2',h_3'\cdot h_1',H'\!\!>_h<\!\!N_2',N_3'\cup N_1'\!\!>_r$.\\
      Hence, $<\!\!\alpha_1\!\!>_k<\!\!s\!\!>_s<\!\!h_1,h_3\cdot h_2,H\!\!>_h<\!\!N_1,N_3\cup N_2\!\!>_r\xLongrightarrow[\Gamma, A_1, B_1]{}^*<\!\!\cdot\!\!>_k<\!\!s'\!\!>_s<\!\!h_1',h_3'\cdot h_2',H'\!\!>_h<\!\!N_1',N_3'\cup N_2'\!\!>_r$.\\
      Since $\mathbf{writes}(\alpha_1)\cap A_2 =\emptyset$ and $\mathbf{writes}(\alpha_1)\cap B_2 =\emptyset$, by environment move, we get $<\!\!\alpha_2\!\!>_k<\!\!s\!\!>_s<\!\!h_2,h_3\cdot h_1,H\!\!>_h<\!\!N_2,N_3\cup N_1\!\!>_r\xLongrightarrow[\Gamma, A_2, B_2]{}^*<\!\!\cdot\!\!>_k<\!\!s'\!\!>_s<\!\!h_2',h_3'\cdot h_1',H'\!\!>_h<\!\!N_2',N_3'\cup N_1'\!\!>_r$.
    \end{enumerate}
\end{itemize}
That completes the proof.
\end{proof}
\end{document}